\documentclass[11pt]{article}

\usepackage[margin=1in]{geometry}
\usepackage{amsmath,amssymb,amsthm}
\usepackage{placeins}
\usepackage{bm}
\usepackage{booktabs}
\usepackage{multirow}
\usepackage{graphicx}
\usepackage{subcaption}
\usepackage[ruled,vlined]{algorithm2e}
\usepackage[numbers]{natbib}
\usepackage{xcolor}
\usepackage{url}
\usepackage[colorlinks=true,citecolor=blue,linkcolor=blue,urlcolor=blue]{hyperref}
\usepackage{placeins}

\newtheorem{assumption}{Assumption}

\newtheorem{proposition}{Proposition}
\newtheorem{theorem}{Theorem}

\newcommand{\Description}[1]{}
\newenvironment{acks}{%
  \phantomsection
  \addcontentsline{toc}{section}{Acknowledgments}
  \section*{Acknowledgments}%
}{}

\let\origincludegraphics\includegraphics
\renewcommand{\includegraphics}[2][]{%
  \IfFileExists{#2}{\origincludegraphics[#1]{#2}}{%
    \fbox{\begin{minipage}[c][1.6in][c]{0.9\linewidth}\centering\footnotesize Missing figure file:\\ \texttt{\detokenize{#2}}\end{minipage}}%
  }%
}

\title{Causal Estimation of Share-Induced Engagement with Flywheel Effects\thanks{Accepted by the 32nd ACM SIGKDD Conference on Knowledge Discovery and Data Mining (KDD 2026).}}
\author{%
Weitao Cheng\\
\small HKUST\\
\small \texttt{wchengal@connect.ust.hk}
\and
Yilin Li\\
\small Tencent\\
\small \texttt{ilynli@tencent.com}
\and
Yong Wang\\
\small Tencent\\
\small \texttt{darwinwang@tencent.com}
\and
Nian Si\thanks{Corresponding author.}\\
\small HKUST\\
\small \texttt{niansi@ust.hk}\\
}
\date{}

\begin{document}
\maketitle

\begin{abstract}
Sustainable user growth in online platforms depends not only on acquiring new users but also on reactivating and engaging existing ones through social sharing features. A well-designed sharing feature can trigger a self-reinforcing ``flywheel effect'': reactivated users become potential sharers whose engagement propagates through the network over multiple rounds, amplifying total engagement. Measuring the causal impact of such sharing features is challenging, as their effects unfold through complex social networks and temporal cascades, violating the no-interference assumption underlying classical A/B testing. We develop a framework for experiments on sharing features that accounts for interference caused by the flywheel effect and targets a global treatment effect on share-induced engagement. Our estimator is motivated by a flow-balance identity and interprets share-induced engagement as a geometric amplification process, yielding a closed-form propagation adjustment that accounts for multi-round diffusion using commonly available attribution logs.  Under mild conditions, we establish consistency of the proposed estimator and develop a valid A/A testing procedure for pipeline validation. Simulation studies show that our method substantially reduces bias relative to the difference-in-means estimator and first-order adjustments, while the proposed A/A test maintains nominal Type I error. We also extend the framework to a user-level reactivation metric via a Poisson approximation.  Finally, we demonstrate the approach on a real-world large-scale online platform and discuss empirical implications for evaluating sharing feature designs.
\end{abstract}

\noindent\textbf{Keywords:} A/B tests, experimental design, causal inference, interference, social network, global treatment effect

\bigskip
\section{Introduction}
\label{sec:intro}
The pursuit of sustainable user growth is a central objective for online content platforms. Beyond acquiring new users through advertising or promotions, an increasingly important strategy is to reactivate dormant users and deepen engagement among existing ones through the careful design of recommendation systems. Among these mechanisms, social sharing features play a particularly critical role by enabling users to disseminate content to peers within and beyond the platform \cite{heimbach2018sharingmechanism}. When designed effectively, such features can generate a self-reinforcing feedback loop—what we term the \textbf{flywheel effect} (borrowing the flywheel metaphor from the management literature; see \citep{anderson2022value}). In this loop, shared content re-engages inactive recipients, who subsequently become active users and potential future sharers \citep{zhao2015seismic, rizoiu2017tutorial}.
Through this mechanism, engagement and user growth propagate organically through the underlying social network, thereby amplifying the long-run impact of an initial product intervention.

Engineers on recommendation-system platforms frequently propose new features aimed at boosting content sharing. These interventions may involve modifying recommendation algorithms to promote content that is more likely to be shared, or redesigning the sharing interface itself to reduce friction and make sharing easier for users. Like other product features, all such changes must be evaluated through A/B testing before being deployed.

However, precisely quantifying the causal effect of introducing a new sharing feature is challenging using classical A/B testing methodologies. Unlike direct treatments with immediate, individual-level responses, the effects of sharing features are inherently indirect, network-mediated, and dynamic, often unfolding over extended time horizons through cascades of social interactions and feedback loops. In such settings, the presence of interference violates the Stable Unit Treatment Value Assumption (SUTVA) \citep{imbens2015causal}. As a consequence, standard experimental designs may substantially mismeasure—or even fail to detect—the true impact of these features on user growth and engagement.

In this paper, we study experiments involving sharing features and develop a theoretical framework to analyze the interference induced by flywheel effects. Building on this framework, we propose several practical and easy-to-implement estimators for A/B tests that achieve low bias and low variance, enabling reliable quantification of the causal impact of sharing features across a range of metrics of interest. Our main contributions are as follows:

\textbf{1. A framework for flywheel effects.} We provide a model of sharing dynamics on online platforms that captures how content sharing induces network-mediated interference and dynamic feedback loops via a multivariate Hawkes process \citep{embrechts2011multivariate}. The model explicitly characterizes how an initial intervention can propagate through user reactivation and subsequent sharing, thereby enabling a principled analysis of different estimators in A/B tests under flywheel effects.

   \textbf{2. A propagation-adjusted treatment effect estimator.} Based on our model, we derive a sender--receiver flow-balance identity that links share-induced views with receiver-side share-driven views. Building on this identity, we propose a class of estimators for various metrics of interest. The estimators are log-based and model-free, and do not require fitting Hawkes process parameters, making them robust to model misspecification.

 \textbf{3. Theoretical guarantees for the proposed estimators and inference procedures.} Under a homogeneous propagation setting and
 % mild local dependence conditions
 mild conditions, we establish consistency of the proposed estimator. We further develop an inference procedure that achieves asymptotically exact Type I error control in A/A tests.

 \textbf{4. Numerical simulation and real-world deployment performance.} 
We evaluate our estimators through numerical simulations. The results show that our methods achieve the lowest bias and comparable variance among all benchmark approaches.  
Furthermore, the proposed method has been deployed on the experimentation platform of a large-scale real-world social network. Several sharing-related metrics have been developed and integrated with multiple business applications, successfully supporting the evaluation of updated sharing features and algorithms.

The remainder of the paper is organized as follows. Section \ref{sec:literature} reviews the related literature. Section \ref{sec:sharingmodel} introduces the content-sharing framework. Section \ref{sec:AB} studies the experimental design and presents our proposed estimator. Section \ref{sec:inference} develops the corresponding inference procedure. Section \ref{sec:GTE_ra} discusses the extension to reactivation rate. Finally, Sections \ref{sec:numerical} and \ref{sec:realworld} report the simulation results and real-world implementation, respectively.

% \vspace{-0.15in}
\section{Literature Review}
\label{sec:literature}
In this section, we review related work on sharing mechanisms and interference in online platforms and network settings.

\textbf{Sharing Mechanism.}
A large body of literature studies the \emph{sharing mechanism}—the product design choices that determine friction and available sharing channels. \citet{heimbach2018sharingmechanism} show that sharing-mechanism design can substantially alter user sharing behavior, highlighting the need for rigorous evaluation of sharing-feature changes.

At the process level, resharing naturally induces a self-reinforcing cascade: each new reshare exposes the content to additional audiences who may further reshare it, leading to multi-round amplification and a flywheel effect. This compounding dynamic is well captured by self-exciting point-process models. \citet{zhao2015seismic} model reshare arrivals as a self-exciting process and leverage early diffusion signals to predict eventual popularity, while \citet{rizoiu2017tutorial} provide a comprehensive overview of Hawkes-process foundations and practical estimation tools for modeling diffusion and contagion in social media.

Field experiments complement these modeling approaches by quantifying peer-mediated effects of social exposure. \citet{bakshy2012social} present experimental evidence from social advertising showing that peer exposure can propagate treatment effects beyond directly treated users. Relatedly, the influence maximization literature studies how seeding strategies shape cascade reach; \citet{su2023unveiling} analyze the sensitivity of individual gains in such settings, offering an optimization perspective on amplification in sharing flywheels.

\noindent\textbf{Experimental Design and Bias Analysis in Online Platforms.} Interference is widely recognized in the literature, and empirical evidence
\citep{blake2014marketplace,holtz2020reducing,fradkin2015search} shows that the bias it induces can be comparable in magnitude to the treatment effect itself. In particular, A/B tests on two-sided platforms are especially vulnerable to interference arising from competition and spillover effects. \citet{johari2022experimental}, \citet{li2022interference}, and \citet{dhaouadi2023price} study bias on both the demand and supply sides using stochastic modeling frameworks. Building on this line of work, \citet{johari2022experimental, bajari2023experimental}, and \citet{masoero2026multiple} propose two-sided randomization schemes, also referred to as multiple-randomization designs. In addition, \citet{bright2025reducing} analyze linear-programming-based matching mechanisms and develop debiased estimators using shadow prices, while \citet{wang2026experimental} further investigates interventions that directly modify the matching mechanism itself.
Other work focuses on more specialized settings. For example, \citet{chawla2016b,basse2016randomization,liao2023statistical,liao2024interference} study experimental design in auctions; \citet{si2023tackling} examines interference induced by feedback loops; \citet{zhan2024estimating,zhang2025debiasing} consider seller-side experiments under recommender-system interference; \citet{li2025causal,li2025dyadic} focus on dyadic data settings; and \citet{wager2021experimenting,munro2025treatment} study experimentation under equilibrium models. However, to the best of our knowledge, interference induced by sharing mechanisms has not been systematically studied in the existing literature.

\noindent\textbf{Network Interference. }Interference on networks has also been extensively investigated across operations research, statistics, and economics \citep{hudgens2008toward, gui2015network, basse2019randomization, li2022random, comola2021treatment,shirani2024causal}. A common strategy for reducing spillover effects is cluster randomization, where units are partitioned into approximately disconnected subgraphs using prior knowledge of the network topology; see, for example, \citet{aronow2017estimating, candogan2024correlated, ugander2013graph, ugander2023randomized}.
In platform environments with endogenous interactions, \citet{weng2024experimental} analyze settings in which treatment assignments influence outcomes through algorithmic matching mechanisms, such as opponent selection in online games. Another stream of work adopts parametric interference structures — often linear — which permits identification and estimation using regression-based methods \citep{toulis2013estimation, baird2018optimal, bramoulle2009identification, harshaw2023design, yu2022estimating}. 
A closely related line of work summarizes network interference through exposure mappings.  
\citet{forastiere2022estimating} estimate direct and spillover effects using Bayesian generalized propensity scores.
\citet{leung2022causal} studies randomized experiments under approximate neighborhood interference, allowing potential outcomes to depend on the full treatment assignment vector and allowing treatments assigned to distant alters to have nonzero but decaying effects.
These frameworks are close in spirit to
the diffusion-like spillovers considered in this paper. The key distinction is the target estimand and the resulting estimator: these works primarily focus on exposure-specific effects defined through exposure mappings, whereas we target the platform-level global treatment effect induced by full deployment of a sharing feature and develop a propagation-adjusted estimator tailored to attribution logs and multi-round sharing cascades. 
Another related but distinct literature studies group-formation experiments, where the intervention directly determines how individuals are matched and the central objective is to measure peer effects \citep{basse2024randomization}. 

\section{A Framework of Content Sharing}
\label{sec:sharingmodel}

We develop a dynamic model for sharing events in social networks using a multivariate Hawkes process \cite{embrechts2011multivariate, rizoiu2017tutorial}. Based on this model, we define the quantities of interest and formalize the notion of the GTE.

\subsection{Multivariate Hawkes Process for Sharing}
\label{subsec:hawkes}

We consider an online platform with
users indexed by $i\in\{1,\dots,n\}$ and contents indexed by $k\in\{1,\dots,m\}$.
User relations are represented by an undirected graph $\mathcal{G}=(\mathcal{V},\mathcal{E})$
with adjacency matrix $G=(g_{ij})_{n\times n}$, where $g_{ij}=1$ if users $i\neq j$ are connected
and $g_{ii}=0$. Let $\bar g := n^{-1}\sum_{i,j} g_{ij}$ denote the average degree of $\mathcal{G}$.

For user--content interactions, for each content $k$ we define an $n$-variate marked point process
$N^{(k)}(t)=\bigl(N_{1k}(t),\dots,N_{nk}(t)\bigr), t \geq 0$, 
where $N_{ik}(t)$ counts the number of view events of content $k$ by user $i$ up to time $t$.
Let $\mathcal{H}_t$ denote the filtration generated by all events up to time $t$. 
We decompose the view process into two components according to its origin: self-discovery and sharing:
\[
N_{ik}(t)=N_{ik}^{d}(t)+N_{ik}^{s}(t),
\]
where $N^{d}_{ik}(t)$ counts views arising from independent discovery channels
(e.g., recommendation exposure, search) and $N^{s}_{ik}(t)$ counts views triggered
by sharing. 
Suppose that events in component $(i,k)$ occur at times $\{t^{d}_{ik\ell}\}_{\ell\ge 1}$ for self-discovery
and $\{t^{s}_{ik\ell}\}_{\ell\ge 1}$ for sharing. Then we can write
\begin{equation}
N_{ik}^{d}(t) = \sum_{\ell\ge 1}\mathbf{1}\{t^{d}_{ik\ell}\le t\}, \text{ and } N_{ik}^{s}(t) = \sum_{\ell\ge 1}\mathbf{1}\{t^{s}_{ik\ell}\le t\}.
\label{eq:N_sum_ds}
\end{equation}

Content viewing and sharing on social platforms exhibit temporally clustered, self-reinforcing dynamics: a view or share exposes neighbors and transiently increases the likelihood of subsequent views or reshares, with effects that decay over time. Self-exciting point processes—particularly Hawkes processes—are well suited to capture such short-term excitation and decay, and have been widely used to model information diffusion and popularity dynamics in social media (e.g., retweet cascades and item popularity) \citep{zhao2015seismic, rizoiu2017tutorial, rizoiu2017hip}. Moreover, the branching-process interpretation provides an intuitive view of content diffusion by separating exogenous discovery from endogenous social propagation.

Specifically, we model the discovery component $N_{ik}^{d}(t)$ as a Poisson process with a nonhomogeneous, nonnegative intensity $\mu_{ik}(t)$, which may vary across users and contents~\cite{farajtabar2014shaping}. We assume that $\mu_{ik}(t)$ has support only on $[0,T]$, reflecting the typically short duration of the experiment. 
For each ordered pair $i\to j$ and content $k$, we define two nonnegative \emph{marked kernels}, $\phi_{ijk}^{d}(t)$ and $\phi_{ijk}^{s}(t)$ for $t\ge 0$, corresponding to self-discovery and sharing origins, respectively. We impose the network constraint that 
$
\phi_{ijk}^{d}(t)=\phi_{ijk}^{s}(t)\equiv 0$, whenever $(i,j)\notin \mathcal{E}.
$
Here, if a user shares a piece of content, we assume that the sharing occurs exactly at the associated view time $t$.

The share-induced viewing intensity for receiver $j$ viewing content $k$, namely the conditional intensity function of the process $N^{s}_{jk}(t)$, is given by
\[
\lambda^s_{jk}(t\mid \mathcal{H}_{t^-})
=
\sum_{i=1}^n \left( \sum_{\ell: t^s_{ik\ell}<t} \phi^s_{ijk}(t-t^s_{ik\ell}) + \sum_{\ell: t^d_{ik\ell}<t} \phi^d_{ijk}(t-t^d_{ik\ell}) \right) ,
\]
and thus, the total view intensity,  namely the conditional intensity function of the process $N_{jk}(t)$, is given by
\[
\lambda_{jk}(t\mid \mathcal{H}_{t^-})
=
\mu_{jk}(t) + \lambda^s_{jk}(t\mid \mathcal{H}_{t^-}).
\]
A commonly used parameterization factorizes the kernel into an edge-strength term and a temporal decay term:
\[
\phi^d_{ijk}(t)
=
d_{ijk}^d\cdot \omega_{ijk}(t),\text{ and }\phi^s_{ijk}(t)
=
d_{ijk}^s\cdot \omega_{ijk}(t),
\quad t > 0,
\]
where
 $d^d_{ijk}\ge 0,d^s_{ijk}\ge 0$ are edge-content specific propagation strengths (and $d^d_{ijk}=d^s_{ijk}=0$ if $(i,j)\notin \mathcal{E}$) and $\omega_{ijk}:\mathbb{R}_+\to\mathbb{R}_+$ is a decay kernel governing response-time lags, which usually concentrates on the low end. Without loss of generality, we assume the kernel is normalized such that
$\int_0^\infty \omega_{ijk}(t)dt = 1$ and hence,
\[
\int_0^\infty  \phi^d_{ijk}(t)\,dt=d^d_{ijk} \text{ and } \int_0^\infty  \phi^s_{ijk}(t)\,dt=d^s_{ijk}.
\]
Therefore, $d_{ijk}^{d}$ and $d_{ijk}^{s}$ can be interpreted as the expected number of share-induced views at user $j$ generated by a single view event at user $i$ for content $k$, arising from self-discovery and sharing origins, respectively.

For each content $k$, define the dominating branching matrix
\[
Q^{(k)} := \Bigl(\max\left \{d^d_{ijk},d^s_{ijk}\right \}\Bigr)_{i,j=1}^n,
\]
which serves as an entrywise upper bound on the mean share-induced view counts across users.
A standard sufficient condition
for stability is that the spectral radius $\rho(Q^{(k)})<1$ for each $k$ \cite{daley2003introduction, liniger2009multivariate, achab2018uncovering}.

It is well known that a Hawkes process admits a cluster representation \cite{hawkes1974cluster}: it can be viewed as a Poisson cluster process generated by an immigration–offspring mechanism. 
Specifically, immigrant events occur on the timeline according to a nonhomogeneous Poisson process $N_{ik}^d(\cdot)$. Each immigrant initiates a cluster and independently generates child events after its occurrence time according to another nonhomogeneous Poisson process with rate determined by $\phi^d_{ijk}(\cdot)$. Each child event then acts as a new parent and produces its own offspring according to the same rule with $\phi^s_{ijk}(\cdot)$, independently of all other events, so descendants are formed recursively across generations. All events descended from the same immigrant constitute a single cluster. Under this construction, the resulting process is equivalent in law to the intensity-based formulation $N_{ik}(\cdot)$.

We remark that in large-scale platforms, both the number of users $n$ and the number of contents $m$ can be large.
Nevertheless, each user typically has  $O(1)$ share-induced and share-driven view events within the observation window. The corresponding conditions are stated precisely in Assumption~\ref{assum:bounded}.

\subsection{The Impact of Sharing and Global Treatment Effect}
We are primarily interested in measuring the impact of sharing, defined as
\begin{equation}
    \mathrm{IS}
    :=
    \frac{1}{n}
    \mathbb{E}\!\left[
        \sum_{i=1}^n \sum_{k=1}^m
        \int_0^\infty O_{ik}\, dN_{ik}^s(t)
    \right],
    \label{def:IS}
\end{equation}
where $O_{ik}$ denotes a metric of interest associated with user $i$ and content $k$, independent of all other random variables. 
For example, if $O_{ik}=1$, then $\mathrm{IS}$ represents the expected number of sharing-induced views; 
if $O_{ik}$ is the viewing time of user $i$ on content $k$, then $\mathrm{IS}$ represents the expected sharing-induced watching time.  For convenience of exposition, we focus on the case $O_{ik}=1$ in the following development; however, all results and estimators extend directly to a general outcome variable $O_{ik}$.

Let $D^d_{ijk}(t),D^s_{ijk}(t)$ denote the number of share-induced view events resulting from user $i$ sharing content $k$ with user $j$  at time $t$,  arising from self-discovery and sharing origins, respectively.  
 Hence, conditioning on the occurrence of the corresponding share-induced view event at time $t$, the model of Section \ref{subsec:hawkes} gives us 
 $$
 \mathbb{E}[D^v_{ijk}(t)\mid \Delta N_{ik}^{v}(t) = 1]=d^v_{ijk}, \quad v \in \{d,s\},
 $$
 where $\Delta N_{ik}^{d}(t)=1$ (or $\Delta N_{ik}^{s}(t)=1$) means that the counting process $N_{ik}^{d}$ (or $N_{ik}^{s}$) has a unit jump at time $t$. By the cluster (branching) representation of the Hawkes process, each share–induced view event corresponds to exactly one sender and one viewer. Hence, we have the following flow-balance identity:
\begin{align}
        &\sum_{i=1}^n   \sum_{j=1}^n \sum_{k=1}^m\left(
        \int_0^\infty  D^d_{ijk}(t)\, dN^d_{ik}(t)+
        \int_0^\infty  D^s_{ijk}(t)\, dN^s_{ik}(t)\right) \notag \\
        =&\sum_{i=1}^n \sum_{k=1}^m
        \int_0^\infty 1\cdot\, dN_{ik}^s(t),
        \label{eq:flow-balance}
\end{align}
where the left-hand side counts offspring view events from the sender perspective, while the right-hand side counts the same events from the receiver perspective. Since each share-induced view event involves exactly one sender and one viewer, both sides enumerate the same set of parent–child edges, and thus the identity holds. The following assumption is designed to align with real-world settings.

\begin{assumption} We assume
    \begin{enumerate}
        \item There exists \(T>0\) such that the exogenous intensity satisfies
    \(\mu_{ik}(t)=0\) for all \(t>T\) and all \(i,k\). Moreover, there exists
    a constant \(B<\infty\) such that, for all \(n\),
    \[
    \max_{1\le i\le n}\sum_{k=1}^m \int_0^\infty \mu_{ik}(t)\,dt \le B,
    \ \ 
    \max_{1\le j\le n}
    \sum_{k=1}^m\sum_{i=1}^n w_{ik}^d d_{ijk}^d \le B,
    \]
    where
    \(w_{ik}^d=\mathbb E[N_{ik}^d(\infty)]
    =\int_0^\infty \mu_{ik}(t)\,dt\).

    \item There exists a constant \(\bar{c}\in(0,1)\) such that, for all \(n\),
    \[
    \max_{1\le k\le m}\max_{1\le i\le n}
    \sum_{j=1}^n \max\{d_{ijk}^d,d_{ijk}^s\}\le \bar{c},
    \ \ 
    \max_{1\le j\le n}\sum_{i=1}^n
    \max_{1\le k\le m} d_{ijk}^s \le \bar{c} .
    \]

    \item There exists a constant $\underline{c} > 0$ such that, for all $n$
    \[
    \frac{1}{n} \sum_{i=1}^n   \sum_{j=1}^n \sum_{k=1}^m
        w_{ik}^{d} d_{ijk}^{d}  \ge \underline{c}.
    \] 
    \end{enumerate}
    \label{assum:bounded}
\end{assumption}

    Conditions (1) and (2) ensure that the expected numbers of discovery-origin and share-induced events per user remain uniformly bounded.  Condition (1) bounds the expected number of discovery-origin views per user and the amount of direct sharing traffic received by any single user. Condition (2) keeps each content-specific sharing cascade subcritical and prevents sharing traffic from concentrating on any single user. Condition (3) prevents the setting from degenerating into a trivial no-sharing case.
    Then, the impact-of-sharing metric is well defined.

\begin{proposition}[Well-definedness of $\mathrm{IS}$]
    Under Assumption \ref{assum:bounded} (1), (2) and $O_{ik} =1$, the impact of sharing $\mathrm{IS}$ is well defined, i.e., $\mathrm{IS}<\infty$.
    \label{prop:IS-finite}
\end{proposition}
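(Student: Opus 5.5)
We would prove Proposition~\ref{prop:IS-finite} through the cluster (branching) representation of the Hawkes process. Fix a content $k$ and write $w^s_{jk}:=\mathbb{E}[N^s_{jk}(\infty)]$. Then $\mathrm{IS}=n^{-1}\sum_{j,k} w^s_{jk}$, so it suffices to show that $\sum_{j,k} w^s_{jk}\le C n$ for a constant $C$ depending only on $B$ and $\bar c$. Since $n$ is finite for each fixed platform, finiteness would already follow. The uniform-in-$n$ bound is the natural statement, however, and the later consistency arguments will need it.

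\textbf{Step 1 (generation decomposition).} We split share-induced views by generation. First-generation views at $j$ are the direct children of discovery events. Taking expectations in the compensator of $\lambda^s_{jk}$ and using $\int_0^\infty\omega_{ijk}=1$ gives an expected count of $a_{jk}:=\sum_i w^d_{ik} d^d_{ijk}$. Generation $r+1$ consists of children of generation-$r$ share events. Its expected count at $j$ is $\sum_i (\text{gen-}r\text{ count at } i)\, d^s_{ijk}$. Hence, with $D^{s,(k)}:=(d^s_{ijk})_{i,j}$, the vector of expected counts satisfies $w^{s,(k)}=\sum_{r\ge0} a_k^\top (D^{s,(k)})^r$. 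This follows from monotone convergence together with the tower property over generations: the expected number of children of an event at $i$ equals $d^s_{ijk}$, independently of its time. Equivalently, the vector is the minimal nonnegative solution of $w^{s}=a+ (D^{s})^\top w^s$.

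\textbf{Step 2 (summing).} We then bound $\sum_j w^s_{jk}=\sum_r a_k^\top (D^{s,(k)})^r\mathbf{1}$. Assumption~\ref{assum:bounded}(2) gives row sums $\sum_j d^s_{ijk}\le\bar c$, so $D^{s,(k)}\mathbf{1}\le \bar c\,\mathbf{1}$ entrywise. By nonnegativity, $(D^{s,(k)})^r\mathbf{1}\le\bar c^r\mathbf{1}$. Therefore $\sum_j w^s_{jk}\le (1-\bar c)^{-1}\sum_j a_{jk}$. Summing over $k$ and applying the second bound of Assumption~\ref{assum:bounded}(1), $\sum_k a_{jk}\le B$ for every $j$, gives $\sum_{j,k}w^s_{jk}\le nB/(1-\bar c)$. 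Hence $\mathrm{IS}\le B/(1-\bar c)<\infty$. Alternatively, the first-generation total can be bounded by $\sum_{i,k}w^d_{ik}\bar c\le nB\bar c$ using the first bound in (1), so either part of (1) suffices.

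\textbf{Main obstacle.} The delicate step is justifying the linear recursion in Step 1 rigorously. One must show that the expected number of offspring of an event equals the kernel mass $d_{ijk}$, even though the event times are random. One must also rule out explosion, since a priori the expectations could be infinite. We plan to handle both by working generation by generation with nonnegative terms. Conditional on generation $r$, generation $r+1$ at $j$ is a Poisson process with intensity $\sum_{\text{events }(i,t)}\phi^s_{ijk}(\cdot-t)$, so its conditional mean is exactly $\sum_i N^{(r)}_{ik}(\infty)d^s_{ijk}$. Iterating gives finite partial sums, and Tonelli/monotone convergence passes to the full series. The geometric bound in Step 2 then shows the limit is finite, which in turn implies that the cluster process is almost surely finite.
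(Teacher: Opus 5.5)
Your proposal is correct and follows essentially the paper's argument: the cluster (branching) representation plus a geometric series in the row-sum bound $\bar c$ from Assumption~\ref{assum:bounded}(2). Your ``alternative'' first-generation bound, $\sum_{i,k} w^d_{ik}\bar c \le nB\bar c$, is exactly the paper's route and gives $\mathrm{IS}\le \bar c B/(1-\bar c)$; your main route through the receiver-side bound in (1) gives the slightly weaker $B/(1-\bar c)$, and your generation-by-generation Tonelli argument spells out what the paper states in one line.
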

The proof of Proposition \ref{prop:IS-finite} is deferred to Appendix \ref{subsec:proof-of-IS-finite}.

We assume that the intervention operates by changing the sender’s behavior through the sharing strength parameters $d^d_{ijk}$ and $d^s_{ijk}$. 
Let
$d_{ijk}^{v,T},v\in\{d,s\}$ and $d_{ijk}^{v,C},v\in\{d,s\}$ denote the sharing strength under treatment and
control, respectively. We define $\mathrm{IS}^{T}$ and $\mathrm{IS}^{C}$ as the
resulting sharing impact under treatment and control, induced by
$d_{ijk}^{v,T},v\in\{d,s\}$ and $d_{ijk}^{v,C},v\in\{d,s\}$ for all users on the platform, respectively. We further assume that Assumption \ref{assum:bounded} holds under both global treatment and global control regimes so that \(\mathrm{IS}^{T}\) and \(\mathrm{IS}^{C}\) are well-defined.
The global treatment effect (GTE)
is then defined as
\[
\mathrm{GTE}=\mathrm{IS}^{T}-\mathrm{IS}^{C}.
\]
In the following, we describe how to estimate this quantity using A/B tests. 
A key difficulty in estimating the GTE is that a change in $d_{ijk}$ can have
long-term effects on other users through network propagation. In particular,
the observed sharing behavior of user $i$ may depend on the sharing strengths
of other users. Therefore, under an experimental assignment in A/B tests, sharing behaviors
generally depend on the entire treatment assignment vector, which violates the
stable unit treatment value assumption (SUTVA) \citep{imbens2015causal}.

\section{A/B Tests and Estimators}

\label{sec:AB}

\subsection{Bernoulli Experiment}
\label{subsec:experimentation}
We consider a Bernoulli experimental design in a layered experimentation system, which is arguably one of the most widely used designs in practice. 
We introduce an exposure indicator $V_i$, where $V_i=1$ means user $i$ is included in the experiment and $V_i=0$ otherwise, with $\mathbb{P}(V_i=1)=\pi$ for some $0<\pi\le 1$. Conditional on $V_i=1$, treatment is assigned with probability $\mathbb{P}(Z_i=1 \mid V_i=1)=p$ for some $0<p<1$; users with $V_i=0$ are not exposed and we set $Z_i=0$. The pairs $\{(V_i,Z_i)\}_{i=1}^n$ are independent across users.
Let $\mathbf{V}_n=(V_1,\dots,V_n)^\top$ denote the exposure vector and $\mathbf{Z}_n=(Z_1,\dots,Z_n)^\top$ denote the assignment vector. Define the marginal probabilities
$
\pi_T:=\mathbb P(Z_iV_i=1)=\pi p, \pi_C:=\mathbb P((1-Z_i)V_i=1)=\pi(1-p).
$
Define the realized treatment and control group sizes within the experiment as
$
n_T:=\sum_{i=1}^n Z_iV_i, \ 
n_C:=\sum_{i=1}^n (1-Z_i)V_i.
$
When $\pi=1$, all users are included in the experiment. 

Recall the intervention changes the sender's behavior. Therefore, if user $i$ is assigned to treatment, then $d_{ijk}^{d}=d_{ijk}^{d,T}$ and $d_{ijk}^{s}=d_{ijk}^{s,T}$; if user $i$ is assigned to control, then $d_{ijk}^{d}=d_{ijk}^{d, C}$ and $d_{ijk}^{s}=d_{ijk}^{s,C}$.

\subsection{Estimation}
We derive our estimator from the flow-balance identity \eqref{eq:flow-balance}. By observing 
$N_{ik}(t) = N^{d}_{ik}(t) + N^{s}_{ik}(t)$, the flow-balance identity \eqref{eq:flow-balance} can be equivalently written as 
\begin{align}
        &\sum_{i=1}^n   \sum_{j=1}^n \sum_{k=1}^m
        \int_0^\infty D^d_{ijk}(t)\, dN^d_{ik}(t) \notag \\
        =&\sum_{i=1}^n \sum_{k=1}^m
        \int_0^\infty \left(1- \sum_{j=1}^n D^s_{ijk}(t)\right)\, dN_{ik}^s(t).
        \label{eq:flow-balance2}
\end{align}

Taking expectations on both sides of \eqref{eq:flow-balance2}, we obtain the identity in the following proposition.

\begin{proposition}[Expectation form of the flow-balance identity]
\label{prop:refined-flow-balance-param-form}
Taking expectations on both sides of \eqref{eq:flow-balance2} yields
\begin{equation}
\sum_{i=1}^{n}  \sum_{j=1}^{n} \sum_{k=1}^{m}   w_{ik}^{d} d^d_{ijk}
= \sum_{i=1}^{n} \sum_{k=1}^{m}   w_{ik}^{s} \Bigl( 1 - q_{ik}\Bigr),
\label{eq:refined-flow-balance-param-form}
\end{equation}
where $q_{ik}=\sum_{j=1}^n d^s_{ijk}$ and
$$w_{ik}^d=\mathbb{E}[N^d_{ik}(\infty)]=\int_0^\infty \mu_{ik}(t)\,dt,\quad w_{ik}^s=\mathbb{E}[N^s_{ik}(\infty)].$$
\end{proposition}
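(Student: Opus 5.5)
We take expectations of both sides of \eqref{eq:flow-balance2} term by term and evaluate each integral against a counting process with the tower property. The key fact is that the number of children of an event depends only on that event's own offspring Poisson process. In the cluster representation of Section~\ref{subsec:hawkes}, an event of type $v\in\{d,s\}$ at user $i$ for content $k$ at time $t$ produces offspring at $j$ through a Poisson process with intensity $d^v_{ijk}\,\omega_{ijk}(\cdot-t)$. This process is independent of $\mathcal{H}_{t}$ and of all other offspring processes. Since $\int_0^\infty\omega_{ijk}=1$, its total count $D^v_{ijk}(t)$ satisfies $\mathbb{E}[D^v_{ijk}(t)\mid \mathcal{H}_t,\Delta N^v_{ik}(t)=1]=d^v_{ijk}$, which is the relation stated before \eqref{eq:flow-balance}.

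\textbf{Step 1 (sender-side terms).} We write $\int_0^\infty D^v_{ijk}(t)\,dN^v_{ik}(t)=\sum_{\ell} D^v_{ijk}(t^v_{ik\ell})$ as a sum over the event times in \eqref{eq:N_sum_ds}. Conditioning on the collection of type-$v$ events at $(i,k)$ and using the independence of offspring marks gives $\mathbb{E}\bigl[\sum_\ell D^v_{ijk}(t^v_{ik\ell})\bigr]=d^v_{ijk}\,\mathbb{E}[N^v_{ik}(\infty)]$. This is a Wald-type identity: the number of summands is random, but each mark has conditional mean $d^v_{ijk}$ given that the event occurred. Equivalently, $D^v_{ijk}(t)$ can be viewed as a predictable-in-expectation mark, and we can apply the compensator identity $\mathbb{E}\int H\,dN=\mathbb{E}\int \mathbb{E}[H\mid\mathcal{H}_{t^-},\Delta N=1]\,dN$. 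With $v=d$ this gives the left side $\sum_{i,j,k} w^d_{ik}d^d_{ijk}$, where $w^d_{ik}=\int\mu_{ik}$ because $N^d_{ik}$ is Poisson. With $v=s$ and summation over $j$ it gives $\mathbb{E}\int\sum_j D^s_{ijk}\,dN^s_{ik}=q_{ik}w^s_{ik}$.

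\textbf{Step 2 (assembling).} On the right side, $\mathbb{E}\int 1\,dN^s_{ik}=w^s_{ik}$, so the expected right side equals $\sum_{i,k}w^s_{ik}(1-q_{ik})$, which is \eqref{eq:refined-flow-balance-param-form}.

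\textbf{Main obstacle (integrability).} Splitting the expectation of a difference and exchanging expectation with the infinite sums over events and over $(i,j,k)$ requires every term to be integrable. We justify this with Tonelli's theorem, since all integrands are nonnegative, applied to each of the three nonnegative pieces of \eqref{eq:flow-balance} separately, before rearranging into \eqref{eq:flow-balance2}. We also need finiteness: $\sum_{i,k}w^s_{ik}=n\,\mathrm{IS}<\infty$ by Proposition~\ref{prop:IS-finite}. Then $\sum_{i,k}q_{ik}w^s_{ik}\le \bar c\sum_{i,k} w^s_{ik}<\infty$ by Assumption~\ref{assum:bounded}(2), and $\sum w^d d^d\le nB$ by Assumption~\ref{assum:bounded}(1). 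So all expectations are finite and the rearrangement is legitimate. The only delicate point is making the Wald-type conditioning rigorous, because the number of events is random and its law is entangled with earlier marks. We handle this by conditioning on the event's own generation: the offspring count is drawn independently once the parent exists. An alternative is to use the marked-point-process compensator with mark kernel $\mathrm{Poisson}(d^v_{ijk})$.
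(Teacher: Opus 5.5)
Your proposal is correct and follows essentially the same route as the paper. Both arguments use the cluster representation (each parent's direct offspring count toward $j$ has conditional mean $d^v_{ijk}$, independently of everything up to the parent's birth), apply Tonelli to get $\mathbb{E}\int_0^\infty D^v_{ijk}(t)\,dN^v_{ik}(t)=d^v_{ijk}\,\mathbb{E}[N^v_{ik}(\infty)]$, and substitute into \eqref{eq:flow-balance2}. Your finiteness check via Proposition~\ref{prop:IS-finite} and Assumption~\ref{assum:bounded}, which justifies splitting the expectation of the difference, is a refinement the paper leaves implicit. You are also right that for $v=s$ one must condition on the history up to each event's birth rather than on the whole collection of share-origin events at $(i,k)$, since cascades $i\to j\to i$ correlate that collection with earlier offspring counts.
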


The proof of Proposition \ref{prop:refined-flow-balance-param-form} is deferred to Appendix \ref{subsec:proof-of-flow-balance-param-form}.

Equation~\eqref{eq:refined-flow-balance-param-form} holds under global treatment
(i.e., $d^v_{ijk}=d_{ijk}^{v,T}$ for $v\in\{s,d\},i=1,2,\ldots,n$ with $q_{ik}=q^T_{ik}$ defined accordingly), global control
(i.e., $d^v_{ijk}=d_{ijk}^{v,C}$ for $v\in\{s,d\},i=1,2,\ldots,n$ with $q_{ik}=q^C_{ik}$ defined accordingly), and under the experimental
setting conditional on the treatment assignment.

Note that the impact of sharing is written as 
$$
\mathrm{IS} =\frac{1}{n}\sum_{i=1}^{n} \sum_{k=1}^{m}   w_{ik}^{s}. 
$$
Combining this expression with Equation \eqref{eq:refined-flow-balance-param-form}, we have there exists an \textit{effective} downstream sharing rate $q_{\mathrm{eff}}$ such that 
$$
\mathrm{IS}= \frac{\frac{1}{n}\sum_{i=1}^{n}  \sum_{j=1}^{n} \sum_{k=1}^{m}   w_{ik}^{d} d^d_{ijk}}{1-q_{\mathrm{eff}}},
$$
and further
$$
\mathrm{GTE}=\frac{\frac{1}{n}\sum_{i=1}^{n}  \sum_{j=1}^{n} \sum_{k=1}^{m}   w_{ik}^{d} d^{d,T}_{ijk}}{1-q_{\mathrm{eff}}^T}-\frac{\frac{1}{n}\sum_{i=1}^{n}  \sum_{j=1}^{n} \sum_{k=1}^{m}   w_{ik}^{d} d^{d,C}_{ijk}}{1-q_{\mathrm{eff}}^C}.
$$

This representation admits a natural \emph{branching} interpretation: the numerator is the expected number of share-view offspring directly attributable to discovery-origin parent events, while $q_{\mathrm{eff}}$ is an average downstream sharing rate. Sharing cascades lead to a geometric amplification with factor $q_{\mathrm{eff}}$, yielding the multiplier $1/(1-q_{\mathrm{eff}})$.

This representation motivates the following estimation method. Let $Y_i^d$ denote the total number of share-induced view events generated by user $i$'s independent discovery-origin view events. Then
\[
\mathbb{E}[Y_i^d] = \sum_{j=1}^{n} \sum_{k=1}^{m} w_{ik}^{d}\, d^d_{ijk}.
\]
We define the sample means of $Y^d_i$ in the treatment and control groups, respectively, as
\begin{equation}
\widehat Y^d_T := \frac{1}{n_T}\sum_{i=1}^n Z_i V_i Y_i^{d},
\quad
\widehat Y^d_C := \frac{1}{n_C}\sum_{i=1}^n (1-Z_i) V_i Y_i^{d},
\label{eq:arm-specific-mean}
\end{equation}
to be the estimators for the numerators. 
To estimate $q^{T}_\mathrm{eff}$ and $q^{C}_\mathrm{eff}$, we use natural plug-in estimators. Specifically, we define $W_i^{s}$, the total number
of share-induced views received by user $i$, and $Y_i^{s}$, the total number of direct share-induced view events generated by user $i$'s share-origin view events. We estimate the downstream sharing rate in the treatment and control groups  by
\begin{equation}
    \begin{aligned}
        \widehat q^{T}
        =
        {\widehat Y^s_T}/
             {\widehat W^s_T },
        \quad
        \widehat q^{C}
        =
        {\widehat Y^s_C}/
             { \widehat W^s_C },
    \end{aligned}
    \label{eq:calculate-q}
\end{equation}
where 
\begin{align*}
\widehat Y^s_T &:= \frac{1}{n_T}\sum_{i=1}^n Z_i V_i Y_i^{s},
\quad
\widehat Y^s_C := \frac{1}{n_C}\sum_{i=1}^n (1-Z_i)  V_i Y_i^{s}, \\
\widehat W^s_T &:= \frac{1}{n_T}\sum_{i=1}^n  Z_i V_i W_i^{s},
\quad
\widehat W^s_C := \frac{1}{n_C}\sum_{i=1}^n (1-Z_i)  V_i W_i^{s}.
\end{align*}
Combining these estimators with the averages of share-view events directly induced by independent discovery channels  in \eqref{eq:arm-specific-mean} yields the estimator
\begin{equation}
\widehat{\mathrm{GTE}}
=
\frac{\widehat Y^d_T}{1-\widehat q^{T}}
-
\frac{\widehat Y^d_C}{1-\widehat q^{C}}.
\label{eq:prop-estimator}
\end{equation}

\subsection{Consistency of the Proposed Estimator}

In this section, we establish the consistency of the proposed estimator for the
global treatment effect (GTE).
For clarity of exposition, we first present a sufficient homogeneity condition.

\begin{assumption}[Homogeneous downstream view]\label{assum:homogeneity}
For each regime $a\in\{T,C\}$, there exists a constant $q^{a}$ such that for all $(i,k)$,
\[
q_{ik}^{a}\equiv q^{a}.
\]
\end{assumption}

Then, Assumption \ref{assum:homogeneity} together with Assumption \ref{assum:bounded} gives the following consistency results.

\begin{theorem}[Consistency of $\widehat{\mathrm{GTE}}$]
Suppose that Assumption~\ref{assum:bounded} holds under both global treatment and global control, and that Assumption~\ref{assum:homogeneity} holds. Then 
\[
\widehat{\mathrm{GTE}} - \mathrm{GTE}
\xrightarrow{p} 0.
\]
\label{theorem:consistency}
\end{theorem}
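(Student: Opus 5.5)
The plan is to exploit that, under Assumption~\ref{assum:homogeneity}, each of the three plug-in quantities in \eqref{eq:prop-estimator} is an arm-wise average whose conditional mean given the assignment is a simple, assignment-free or exactly linear object. Consistency then reduces to a law of large numbers for sums of cluster-additive functionals of the Hawkes process, followed by the continuous mapping theorem.

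\textbf{Identification step.} Under global regime $a\in\{T,C\}$ with $q_{ik}^a\equiv q^a$, Proposition~\ref{prop:refined-flow-balance-param-form} gives $\sum_{i,j,k} w_{ik}^d d_{ijk}^{d,a}=(1-q^a)\sum_{i,k}w_{ik}^{s,a}$. Hence
\[
\mathrm{IS}^a=\frac{\bar Y^{d,a}}{1-q^a},\qquad \bar Y^{d,a}:=\frac1n\sum_{i,j,k}w_{ik}^d d_{ijk}^{d,a},
\]
so $q_{\mathrm{eff}}^a=q^a$ exactly. Assumption~\ref{assum:bounded}(2) gives $q^a\le\bar c<1$. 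It therefore suffices to show $\widehat Y^d_a-\bar Y^{d,a}\to_p 0$ and $\widehat q^a-q^a\to_p 0$, with $\bar Y^{d,a}\le B$ bounded; continuous mapping then finishes.

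\textbf{Conditional means in the experiment.} I condition on $(\mathbf V_n,\mathbf Z_n)$.

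(i) Discovery views are exogenous Poisson and not affected by assignment. Also, sender $i$ uses its own arm's strengths. Hence $\mathbb E[Y_i^d\mid \mathbf Z_n,\mathbf V_n]=\sum_{j,k}w_{ik}^d d_{ijk}^{d,Z_i}$, which depends only on $Z_i$. Since the pairs $(V_i,Z_i)$ are independent with bounded summands, $n^{-1}\sum_i Z_iV_i\,\mathbb E[Y_i^d\mid\cdot]-\pi_T\bar Y^{d,T}\to_p0$ by a weak LLN (Chebyshev). Together with $n_T/n\to_p\pi_T$, this identifies the limit of $\widehat Y^d_T$ once fluctuations around conditional means are controlled.

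(ii) For the share-origin part, each share-origin view of treated user $i$ on content $k$ produces offspring with mean $q_{ik}^T=q^T$. Applying the compensator/martingale argument of Proposition~\ref{prop:refined-flow-balance-param-form} user by user gives $\mathbb E[Y_i^s-q^{Z_i}W_i^s\mid \mathbf Z_n,\mathbf V_n]=0$ exactly. This holds even though $\mathbb E[W_i^s\mid\cdot]$ depends on the whole assignment in a complicated way. This is the key use of homogeneity: it makes $\widehat Y^s_T-q^T\widehat W^s_T$ a mean-zero quantity, so interference in $W^s$ never needs to be modeled.

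\textbf{Variance control (main obstacle).} I must show that $n^{-1}\sum_i Z_iV_i Y_i^d$ and $n^{-1}\sum_i Z_iV_i(Y_i^s-q^TW_i^s)$, and their control analogues, have conditional variance $O(1/n)$. The difficulty is that cascades couple users across the network.

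My first attempt uses the cluster representation. Given the assignment, the process is a Poisson cluster process, and clusters rooted at distinct immigrants are independent. Each statistic is a sum over clusters of a functional $f(\text{cluster})$ with $|f|\le 2S$, where $S$ is the number of events in the cluster. The Poisson-cluster variance formula then gives
\[
\mathrm{Var}\Bigl(\sum_{\text{clusters}} f\Bigr)\le \sum_{i,k}\int\mu_{ik}\;\mathbb E[S_{ik}^2].
\]
The row-sum bound $\sum_j\max\{d^d_{ijk},d^s_{ijk}\}\le\bar c<1$ in Assumption~\ref{assum:bounded}(2) implies a subcritical branching process. Its progeny has uniformly bounded second moment, of order $(1-\bar c)^{-3}$, and one dominates each content's cascade by a single-type Galton--Watson process with Poisson$(\bar c)$ offspring. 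Combined with Assumption~\ref{assum:bounded}(1), the variance is $O(nB)$, and dividing by $n^2$ gives $O(1/n)$. If the Poisson-cluster formula gives trouble because of the marked two-kernel structure, I would instead bound covariances directly through the branching matrix powers $\sum_r (Q^{(k)})^r$.

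\textbf{Denominator and conclusion.} To make $\widehat q^T$ well behaved I need $\widehat W^s_T$ bounded away from zero in probability. I bound $W_i^s$ below by the direct discovery-origin offspring received by $i$. Receivers' assignments are independent of senders' discovery streams, so
\[
\mathbb E\Bigl[n^{-1}\sum_i Z_iV_iW_i^s\Bigr]\ge \pi_T\, n^{-1}\sum_{i,j,k}w_{ik}^d\min_a d^{d,a}_{ijk}\ge \pi_T\underline c.
\]
This lower bound needs a mild extension of Assumption~\ref{assum:bounded}(3) to the mixed regime, and I would check whether the arm-wise version suffices. Combined with the variance bound, this gives $\widehat W^s_T\ge \pi_T\underline c/2$ with probability tending to one. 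Then $\widehat q^T-q^T=(\widehat Y^s_T-q^T\widehat W^s_T)/\widehat W^s_T\to_p0$. The same holds for the control arm, and Slutsky's lemma yields $\widehat{\mathrm{GTE}}-\mathrm{GTE}\to_p0$.
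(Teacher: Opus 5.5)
Your overall architecture matches the paper's: identification under homogeneity via Proposition~\ref{prop:refined-flow-balance-param-form}, then consistency of $\widehat Y^d_a$, a lower bound on $\widehat W^s_a$, and $\widehat q^a-q^a=o_p(1)$. The denominator step, however, has a genuine hole.

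Your variance bounds are conditional on $(\mathbf V_n,\mathbf Z_n)$, so they only control Hawkes-level fluctuations around $\mathbb E[\,\cdot\mid\mathbf V_n,\mathbf Z_n]$. For $\widehat Y^d_T$ you correctly add an assignment-level Chebyshev step. For $\widehat Y^s_T-q^T\widehat W^s_T$ none is needed, because the conditional mean is exactly zero.

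For the denominator, the conditional mean of your lower bound is the random quadratic form $\sum_{i\neq j}\sum_k Z_jV_j\,w^d_{ik}d^{d,Z_i}_{ijk}$. A lower bound on its \emph{unconditional} mean plus an $O(n)$ conditional variance does not give $\widehat W^s_T\ge\pi_T\underline c/2$ with probability tending to one. If a single receiver $j^*$ absorbed an order-$n$ share of the discovery-origin flow, this form would be of order $n$ only on the event $\{Z_{j^*}V_{j^*}=1\}$. What rules this out is the column-sum bound $\max_j\sum_k\sum_i w^d_{ik}d^d_{ijk}\le B$ in Assumption~\ref{assum:bounded}(1), which your proposal never uses.

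The paper closes this step as follows. With $\mu^d_{ij,T}:=\sum_k w^d_{ik}d^{d,T}_{ijk}$, it shows that the Bernoulli quadratic form $\sum_{i,j}Z_iV_iZ_jV_j\mu^d_{ij,T}$ has variance at most $C_\pi\{\sum_{i,j}(\mu^d_{ij,T})^2+\sum_i(\sum_j\mu^d_{ij,T})^2+\sum_j(\sum_i\mu^d_{ij,T})^2\}=O(n)$. This needs both the row sums and the column sums; for your version, an Efron--Stein bound over the independent pairs $(V_\ell,Z_\ell)$ would work equally well.

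Your mixed-regime worry, by contrast, is unnecessary. The paper counts only direct discovery-origin offspring from treated senders to treated receivers, so only $d^{d,T}$ enters. Assumption~\ref{assum:bounded}(3) under global treatment then gives the lower bound $\pi_T^2\underline c$ on the normalized mean.

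Apart from this gap, your treatment of $\widehat q^T$ is a genuinely different and valid route. You use the exact identity $\mathbb E[Y_i^s-q^{Z_i}W_i^s\mid\mathbf V_n,\mathbf Z_n]=0$. You then bound the conditional variance of $\sum_i Z_iV_i(Y_i^s-q^TW_i^s)$ by the second-moment Campbell formula for the Poisson cluster process, using $|f|\le 2S$ and a Poisson$(\bar c)$ Galton--Watson domination to bound $\mathbb E[S^2]$ uniformly.

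The paper instead enumerates share-induced views in generation order. It treats $Z_{i_r}V_{i_r}(L_r-q^T)$ as a martingale-difference array with conditional variance $q^T$, so $\mathbb E[M_{n,T}^2]\le q^T\,\mathbb E[F_{n,T}]=O(n)$ follows from the first-moment bound of Proposition~\ref{prop:IS-finite} alone.

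The martingale route needs only first moments of cascade sizes, but the filtration must be specified carefully so that $L_r$ is adapted. Your route needs a second-moment bound on cluster size. In exchange, it handles $Y^d$, $Y^s-q^TW^s$ and $W^s$ with a single formula and uses the independence of clusters directly.
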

Detailed proof can be found in Appendix \ref{subsec:proof-theorem-consistency}.

Although Assumption \ref{assum:homogeneity} is stronger than strictly necessary, we argue that in most practical settings our estimator $\widehat{\mathrm{GTE}}$ remains nearly unbiased, as evidenced by both our simulation results and real-world studies. We believe this occurs because the estimation of $\widehat{q}$ involves aggregation across units and periods, which dampens the effect of localized violations of the homogeneity assumption.

\section{Inference Procedure}
\label{sec:inference}

In this section, we develop an inference procedure to construct confidence intervals for $\widehat{\mathrm{GTE}}$ and to conduct hypothesis testing. It is well known that inference for A/B tests under interference is challenging due to the dependence structure induced among units and between treatment assignments and outcomes, as well as estimator bias. Therefore, we establish a central limit theorem (CLT) for $\widehat{\mathrm{GTE}}$ under A/A tests, which enables asymptotically valid control of the type-I error.
\paragraph{A/A tests.}
Under A/A tests, we have $d_{ijk}^{s,T}=d_{ijk}^{s,C},d_{ijk}^{d,T}=d_{ijk}^{d,C}$, for all $i,j=1,2,\ldots,n,k=1,2,\ldots,m$.

For notational convenience, define the per-user vector
$
\bm U_i := (Y_i^{d},\, Y_i^{s},\, W_i^{s})^\top.
$ 
Let
$
n_V:=n_T+n_C,\ 
\bar{\bm U}_V:= n_{V}^{-1} \sum_{i=1}^n V_i\bm U_i,
$
and define the finite-population covariance matrix among exposed users by
\[
\Sigma_V
:=
\frac1{n_V-1}
\sum_{i=1}^n
V_i(\bm U_i-\bar{\bm U}_V)(\bm U_i-\bar{\bm U}_V)^\top.
\]
Let
$
\gamma_V:=\nabla g(\bar{\bm U}_V).
$
The corresponding finite-population standard error is
\begin{equation}
\operatorname{se}^2
:=
\left(\frac1{n_T}+\frac1{n_C}\right)
\gamma_V^\top\Sigma_V\gamma_V
=
\frac{n_V}{n_Tn_C}\gamma_V^\top\Sigma_V\gamma_V.
\label{eq:population-se}
\end{equation}

For $a\in\{T,C\}$, let
\[
\widehat{\bm U}_{a}
:= (\widehat{Y}^d_a,\,\widehat{Y}^s_a,\,\widehat{W}^s_a)^\top .
\]
We define $\widehat{\Sigma}_{T}$ as the sample covariance matrix of $\bm U_{i}$ in the treated samples,  
\begin{equation}
    \begin{aligned}
        \widehat{\Sigma}_{T}
        := \frac{1}{n_{T}-1}\sum_{i=1}^n Z_{i} V_i(\bm{U}_i-\widehat{\bm{U}}_{T})(\bm{U}_i-\widehat{\bm{U}}_{T})^\top,
    \end{aligned}
    \label{eq:sample-cov}
\end{equation}
and $\widehat{\Sigma}_{C}$ accordingly. By defining
$
g(u_{1},u_{2},u_{3})
:=
\frac{u_{1}}{1-u_{2}/u_{3}},
$
the GTE estimator can be equivalently written as 
$
\widehat{\mathrm{GTE}}
=
g(\widehat{\bm U}_{T})-g(\widehat{\bm U}_{C}).
$
We define the standard error under A/A tests as
\begin{equation}
    \begin{aligned}
        \widehat{\mathrm{se}}:=\sqrt{n_T^{-1}\nabla g(\widehat{\bm U}_T)^{\top}\widehat{\Sigma}_T \nabla g(\widehat{\bm U}_T)+n_C^{-1}\nabla g(\widehat{\bm U}_C)^{\top} \widehat{\Sigma}_C \nabla g(\widehat{\bm U}_C)},
    \end{aligned}
    \label{eq:se}
\end{equation}
where \(\nabla g(\widehat{\bm U}_T)\) denotes the gradient of \(g\) evaluated at
\(\widehat{\bm U}_T\), with \(\nabla g(\widehat{\bm U}_C)\) defined accordingly.

The following theorem justifies the normal approximation.

\begin{theorem}[Asymptotic normality under A/A tests]
\label{theorem:aa-normality}
Suppose that Assumption~\ref{assum:bounded} holds. Consider the A/A testing
scenario, i.e.,
\(d_{ijk}^{s,T}=d_{ijk}^{s,C}\) and \(d_{ijk}^{d,T}=d_{ijk}^{d,C}\) for all
\(i,j=1,2,\ldots,n\) and \(k=1,2,\ldots,m\). In the non-degenerate case
$
\liminf_{n\to\infty}\sqrt n\,\mathrm{se}>0
\quad \text{a.s.},
$
we have
\[
\widehat{\mathrm{GTE}}/\mathrm{se}
\xrightarrow{d} \mathcal N(0,1),
\qquad
\widehat{\mathrm{se}}/\mathrm{se}
\xrightarrow{p}1.
\]
Consequently,
$
\widehat{\mathrm{GTE}}/\widehat{\mathrm{se}}
\xrightarrow{d} \mathcal N(0,1).
$
\end{theorem}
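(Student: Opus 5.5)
Under the A/A null the sharing strengths do not depend on assignment, so the joint law of the whole Hawkes system, and hence of $(\bm U_1,\dots,\bm U_n)$, is independent of $(\mathbf V_n,\mathbf Z_n)$. I will therefore condition on the realized outcomes $\{\bm U_i\}$ and on $\mathbf V_n$, and treat the problem as a finite-population randomization problem. Given $\mathbf V_n$, the labels $Z_i$ among exposed users are i.i.d. Bernoulli$(p)$. Conditioning further on $n_T$ turns this into complete randomization of $n_T$ treated among $n_V$ exposed units. The dependence among the $\bm U_i$ induced by the network cascade then becomes irrelevant: the only randomness is the assignment, and $\mathrm{se}$ in \eqref{eq:population-se} is exactly the finite-population (Neyman) standard error with the linearization gradient $\gamma_V=\nabla g(\bar{\bm U}_V)$.

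\textbf{Step 1 (moment control).} Show that the $\bm U_i$ have uniformly bounded second moments and that $n^{-1}\sum_i V_i\|\bm U_i\|^{2+\delta}=O_p(1)$, or at least that $\max_i\|\bm U_i\|^2/n\to0$ in probability. Proposition~\ref{prop:IS-finite} supplies the first-moment structure. For second moments I would use the cluster representation with Assumption~\ref{assum:bounded}(1)--(2). The row and column sums of the branching operators are bounded by $\bar c<1$, so moments of $W_i^s$, $Y_i^s$ and $Y_i^d$ are bounded by geometric series of Poisson-cluster factorial moments, uniformly in $i$ and $n$.

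\textbf{Step 2 (limits of population means).} Show that $\bar{\bm U}_V$ stays in a compact region where $g$ is smooth. This needs $\bar W^s_V$ bounded away from $0$ and $\bar Y^s_V/\bar W^s_V\le\bar c<1$. The lower bound follows from Assumption~\ref{assum:bounded}(3), since $\mathbb E W^s$ is at least the discovery-origin offspring, together with a weak law: $\operatorname{Var}(n^{-1}\sum_i W_i^s)\to0$ because each cluster has bounded expected size. The upper bound follows from $Y_i^s\le$ offspring of user $i$'s share-origin views, with mean at most $\bar c\,\mathbb E W_i^s$, plus the same weak law.

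\textbf{Step 3 (CLT and delta method).} Apply the finite-population CLT for complete randomization (Li and Ding, 2017) to the difference of arm means $\widehat{\bm U}_T-\widehat{\bm U}_C$, using the Lindeberg-type condition from Step 1. Then linearize: $g(\widehat{\bm U}_T)-g(\widehat{\bm U}_C)=\gamma_V^\top(\widehat{\bm U}_T-\widehat{\bm U}_C)+O_p(n^{-1})$ by a second-order Taylor expansion, since $\widehat{\bm U}_a-\bar{\bm U}_V=O_p(n^{-1/2})$ and the Hessian of $g$ is bounded on the compact set from Step 2. The non-degeneracy condition $\liminf\sqrt n\,\mathrm{se}>0$ makes the remainder $o_p(\mathrm{se})$, so $\widehat{\mathrm{GTE}}/\mathrm{se}\to\mathcal N(0,1)$ conditionally. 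Dominated convergence then gives the unconditional result, and $n_T/n\to\pi_T$, $n_C/n\to\pi_C$ handle the random sizes.

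\textbf{Step 4 (variance estimator).} Under complete randomization, each arm's sample covariance $\widehat\Sigma_a$ is consistent for $\Sigma_V$, with error $o_p(1)$ by the bounded fourth-type moments; a truncation argument suffices if only $2+\delta$ moments are available. Also $\nabla g(\widehat{\bm U}_a)\to\gamma_V$ by continuity. Hence $\widehat{\mathrm{se}}^2=(n_T^{-1}+n_C^{-1})\gamma_V^\top\Sigma_V\gamma_V\,(1+o_p(1))$, where non-degeneracy is used to divide. Slutsky then gives the final claim.

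\textbf{Main obstacle.} I expect the hardest part to be Step 1 together with the laws of large numbers in Step 2. These concern the realized outcomes, which are dependent through cascades. Controlling $\max_i\|\bm U_i\|$ and the variance of network averages requires bounding covariances of cluster sizes across users. My first attempt would be to express $\sum_i W_i^s$ as a sum over independent immigrant clusters. The variance is then at most $\sum(\text{immigrant rate})\times\mathbb E[\text{cluster size}^2]$, which is $O(n)$ by Assumption~\ref{assum:bounded}(1)--(2), since subcriticality gives bounded second moments of cluster sizes.
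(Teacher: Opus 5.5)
Your proposal is correct and follows the paper's proof essentially step for step. Like the paper, you condition on $\{\bm U_i\}$, $\mathbf V_n$ and $n_T$ to reduce to complete randomization, use cluster-representation moment bounds, linearize $g$ at $\bar{\bm U}_V$, apply the finite-population CLT to $\gamma_V^\top(\widehat{\bm U}_T-\widehat{\bm U}_C)$ with non-degeneracy making the remainder $o_p(\mathrm{se})$, and finish with consistency of $\widehat\Sigma_T,\widehat\Sigma_C$ and Slutsky.

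The only real difference is how you keep $\bar{\bm U}_V$ away from the singularity of $g$. The paper uses the pathwise flow-balance identity $\sum_i(W_i^s-Y_i^s)=\sum_i Y_i^d$, so the critical denominator is an average of independent $Y_i^d$, and Chebyshev plus Assumption~\ref{assum:bounded}(3) suffice. Your bound $\mathbb E Y_i^s\le\bar c\,\mathbb E W_i^s$ instead needs weak laws for the cascade-dependent totals $\sum_i W_i^s$ and $\sum_i Y_i^s$, which your compound-Poisson cluster-variance argument does supply.
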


Detailed proofs are deferred to Appendix~\ref{subsec:proof-theorem-aa}.

Algorithm~\ref{alg:gte-se} summarizes the computational procedure for estimating the GTE and conducting inference from share-induced view event logs.

\begin{algorithm}[t]
\caption{Propagation-adjusted estimation and inference for the GTE}
\label{alg:gte-se}
\KwIn{
A collection $\mathcal{L}$ of share-induced view events; assignments $\{(Z_{i},V_{i})\}_{i=1}^{n}$.
}

\BlankLine
\textbf{1. Compute per-user summaries.}

Compute $\{(Y_i^{d},Y_i^{s},W_i^{s})\}_{i=1}^n$ by aggregating share-induced view events according to three distinct counting schemes.

\BlankLine
% \textbf{2. Group sizes and estimates.}
\textbf{2. Estimates.}

% $n_T \leftarrow \sum_{i=1}^n Z_iV_i$;\quad
% $n_C \leftarrow \sum_{i=1}^n (1-Z_i)V_i$
Compute $(\widehat{Y}^{d}_T,\widehat{Y}^{d}_C)$ according to \eqref{eq:arm-specific-mean} and $(\widehat{q}^{T},\widehat{q}^{C})$ according to \eqref{eq:calculate-q}. 

\BlankLine
\textbf{3. Plug-in estimator of GTE.}

$\widehat{\mathrm{GTE}}
\leftarrow
\widehat{Y}_T^{d}/(1-\widehat q^T)
-
\widehat{Y}_C^{d}/(1-\widehat q^C)$

\textbf{4. A/A testing.}

(i) Compute the standard error $\widehat{\mathrm{se}}$ by the formula \eqref{eq:se}.

(ii) Compute the test statistic $\widehat{\mathrm{GTE}}/\widehat{\mathrm{se}}$ and the two-sided $p$-value = $2(1 - \Phi(|\widehat{\mathrm{GTE}}/\widehat{\mathrm{se}}|))$, where $\Phi$ is the CDF of the standard normal distribution.

\KwOut{$\widehat{\mathrm{GTE}},\ \widehat{\mathrm{se}},\ p\text{-value}$}

\end{algorithm}

\section{Extension to Reactivation Rate}

\label{sec:GTE_ra}

While the sharing metrics above quantify differences in expected share-induced engagement, we are also interested in the user reactivation rate driven by social exposure within the observation window, defined as

 \begin{align}
    \mathrm{RA}
    := \frac{1}{n}
    \mathbb{E}\!\left[
        \sum_{i=1}^n \mathbf{1}\left\{
       W_i^s \geq 1 \right\}
    \right],
\end{align}
where the indicator equals one if a user is reactivated by at least one share during the observation window. The corresponding global treatment effect is defined analogously and denoted by $\mathrm{GTE}^{\mathrm{ra}}$.

Estimating the reactivation rate is more challenging because it involves a nonlinear
transformation of the observed data. As a consequence, the flow-balance identity in
equation~(\ref{eq:flow-balance}) is no longer applicable. We instead exploit a Poisson
limit under a regime where the per-edge success probabilities are small.

We assume that for each $(i,j,k)$, the sharing parameters $\{d_{ijk}^{s}, d_{ijk}^{d}\}$
are sufficiently small. In this regime, recall that a sum of independent binomial random
variables with small success probabilities converges to a Poisson random variable. Hence,
it is reasonable to approximate
$
W_i^{s} \sim \mathrm{Poisson}\!\bigl(\lambda_i\bigr).
$
The per-user reactivation probability is therefore approximated by
\[
\mathbb{P}(W_i^{s}\ge 1)\approx 1-\exp\{-\lambda_i\}.
\]

To estimate $\lambda^{T}$ and $\lambda^{C}$ for the treatment and control groups,
respectively, we reuse the quantities obtained in the previous analysis. Under the
Poisson approximation, the per-user expected number of share-driven views under global
treatment and global control is estimated by
\[
\widehat{\lambda}^{a}:=\frac{\widehat{Y}^d_a}{1-\widehat{q}^{a}},\quad a\in\{T,C\}.
\]
This leads to the following heuristic estimator of the global treatment effect on the reactivation rate:
\begin{equation}
\label{eq:GTE-ra-hat}
\widehat{\mathrm{GTE}}^{\mathrm{ra}}
=
\Bigl(1-\exp\{-\widehat{\lambda}^{T}\}\Bigr)
-\Bigl(1-\exp\{-\widehat{\lambda}^{C}\}\Bigr)
=
\exp\{-\widehat{\lambda}^{C}\}-\exp\{-\widehat{\lambda}^{T}\}.
\end{equation}

\section{Numerical Experiments}
\label{sec:numerical}
In Section \ref{subsec:comp-methods}, we introduce the benchmark methods for the metrics of impact-of-sharing and reactivation rates, respectively. Section~\ref{subsec:sim-setup} describes the simulation setup, and Section~\ref{subsec:results} reports the results. Appendix \ref{sec:appendix-more-exps} presents additional simulation results. 
The code for the numerical experiments is publicly available online.\footnote{\url{https://github.com/cwt2001/causal-estimation-with-flywheel}}

\subsection{Benchmark Methods}
\label{subsec:comp-methods}

For the impact-of-sharing metric (IS), we consider the following benchmark methods.
\begin{itemize}
    \item \textbf{Difference-in-means estimator (DM):} ignores network propagation effects, i.e., 
    \[ \widehat{\mathrm{GTE}}_{\text{DM}} = \frac{1}{n_{T}} \sum_i Z_iV_{i} S_{i} - \frac{1}{n_{C}} \sum_i (1-Z_i) V_{i} S_{i},
    \]
    where $S_i := Y_i^{d}+Y_i^{s}$ is the total number of share-view events attributed to user $i$'s sharing.
    \item \textbf{Difference-in-means estimator with first-order resharing (DM-FO):} accounts for first-order propagation by replacing $S_i$ with a first-order adjusted outcome $S_i^{(1)}$, i.e., 
    \[
    \widehat{\mathrm{GTE}}_{\text{DM-FO}}
    =
    \frac{1}{n_{T}} \sum_{i=1}^n Z_i V_i S_i^{(1)}
    -
    \frac{1}{n_{C}} \sum_{i=1}^n (1-Z_i) V_i S_i^{(1)},
    \]
    where $S_i^{(1)}$ counts share-view events attributable to $i$'s direct shares and those induced by one subsequent resharing along the cascade triggered by these shares.
    \item \textbf{Graph cluster randomization (GCR):} 
    assigns treatment at the cluster level after partitioning the graph using the Leiden algorithm~\cite{traag2019louvain}. The estimator is \[ \widehat{\mathrm{GTE}}_{\mathrm{GCR}} = \frac{1}{|\mathcal C_T|}\sum_{c\in\mathcal C_T}\bar W_c^s - \frac{1}{|\mathcal C_C|}\sum_{c\in\mathcal C_C}\bar W_c^s, \] where $\mathcal C_T$ and $\mathcal C_C$ are the treated and control clusters, and $\bar W_c^s$ is the average share-induced views in cluster $c$.
\end{itemize}
% graph cluster randomization

Estimating the reactivation rate (RA) is more challenging, even when constructing suitable benchmark methods, because each user may receive content from both treated and control users. Therefore, below we describe a straightforward approach for allocating outcomes to the treatment and control groups.

For each user $i$, we decompose the total number of share-induced views as
$
W_{i}^{s} = W_{i}^{s,T} + W_{i}^{s,C},
$
where $W_{i}^{s,T}$ counts view events attributed to shares from treated senders,
and $W_{i}^{s,C}$ counts those from control senders. Define the empirical assignment
proportions $\widehat{\pi}_T := n_T/n$ and $\widehat{\pi}_C := n_C/n$, and the
reweighted share-induced view counts
\[
\widetilde{W}_{i}^{s,T} = {W_i^{s,T}}/{\widehat{\pi}_{T}},
\qquad
\widetilde{W}_{i}^{s,C} ={W_i^{s,C}}/{\widehat{\pi}_{C}}.
\]
We allocate one unit of reactivation credit to receiver $i$ and split it between treated
and control senders in proportion to their reweighted share-induced exposures.
The resulting treated-minus-control credit difference is
\[
\widehat r_i
:=
\frac{\widetilde W_i^{s,T}-\widetilde W_i^{s,C}}
{\widetilde W_i^{s,T}+\widetilde W_i^{s,C}},
\]
with the convention that $\widehat r_i = 0$ when
$\widetilde W_i^{s,T}+\widetilde W_i^{s,C}=0$.
We then consider the following estimators for $\mathrm{GTE}^{\mathrm{ra}}$.

\begin{itemize}
        \item \textbf{Exposure-weighted estimator (EW).}
        \[
        \widehat{\mathrm{GTE}}^{\mathrm{ra}}_{\mathrm{EW}}
        :=
        \frac{1}{n}\sum_{i=1}^n \widehat r_i.
        \]
        
        \item \textbf{Holdout exposure-weighted estimator (HEW).}
        Leveraging holdout units ($V_i=0$), we define
        \[
        \widehat{\mathrm{GTE}}^{\mathrm{ra}}_{\mathrm{HEW}}
        :=
        \frac{1}{n-n_V}\sum_{i=1}^n (1-V_i)\,\widehat r_i.
        \]
\end{itemize}

\subsection{Simulation Setups}
\label{subsec:sim-setup}

We use a multivariate Hawkes process \cite{embrechts2011multivariate} to simulate the temporal dynamics of user–content interactions on a network. We simulate the impact of the intervention on user sharing behavior by varying the edge–content specific propagation strengths $d_{ijk}^{s,a},d_{ijk}^{d,a}, a \in \{T,C,O\}$, where $O$ denotes the holdout group: users with $V_i=0$. 

We evaluate our estimators on a synthetic user network generated by a Barab\'asi--Albert model \cite{albert2002statistical} with $n=50{,}000$ nodes and average degree 50, yielding a sparse, large-scale graph with a scale-free degree distribution.

Given the network adjacency matrix $G=(g_{ij})$, we simulate user--content interactions with $m$ contents. The discovery (exogenous) intensity is time-homogeneous and factorized as
\[
\mu_{ik}(t)=\frac{a_i^d b_k^d}{mT},\quad t\in[0,T],
\]
where $a_i^d$ captures user activeness and $b_k^d$ captures content popularity. For each experimental condition $a\in\{T,C,O\}$, we parameterize edge--content propagation strengths for discovery-driven and share-driven events by
\[
d_{ijk}^{d,a}
=
g_{ij}\,\phi_i^{d,a}\,\varphi_j^{d,a}\,\theta_k^{d,a}/\bar g,
\quad
d_{ijk}^{s,a}
=
g_{ij}\,\phi_i^{s,a}\,\varphi_j^{s,a}\,\theta_k^{s,a}/\bar g,
\]
where $\bar g := n^{-1}\sum_{i,j} g_{ij}$ is the average degree, and
$\phi_i^{\cdot,a}$, $\varphi_j^{\cdot,a}$, and $\theta_k^{\cdot,a}$ denote sender-, receiver-, and content-specific propagation effects, respectively.

We draw base parameters independently from uniform distributions:
\begin{align*} 
&a_i^{d} \sim U(0,10), \quad b_k^{d} \sim U(0,20), \\
& \phi_i^{d,T} \sim U(0,1), \quad \varphi_j^{d,T} \sim U(0,0.2), \quad \theta_k^{d,T}\sim U(0,1). 
\end{align*}
We set $\varphi_{j}^{s,T}=\varphi_{j}^{d,T}$ and $\theta_{k}^{s,T}=\theta_{k}^{d,T}$, and allow sharing to differ from discovery via
\[
\phi_i^{s,T}=\max(\phi_i^{d,T}+\varepsilon_i,0),\quad \varepsilon_i\sim U(-0.1,0.3).
\]
To introduce heterogeneity across users, we perturb the sender-side effects $\phi$ for the control and holdout groups while keeping all other parameters fixed. Note that this setting violates Assumption \ref{assum:homogeneity}, which further demonstrates the generalizability of our method. Specifically, for $a\in\{C,O\}$,
\[
\phi_i^{d,a}=\max\!\bigl(\phi_i^{d,T}+\delta_i^{a},\,0\bigr),\quad
\phi_i^{s,a}=\max\!\bigl(\phi_i^{s,T}+\delta_i^{a},\,0\bigr),
\]
where $\delta_i^a \sim U(-\Delta,0)$ are drawn independently, and the disturbance parameter $\Delta$ controls the magnitude of treatment effect.

We consider $\Delta\in\{0.3,0.5,0.7\}$. All other parameters are fixed at $m=2000$, $\pi = 0.5$, and $p=0.5$. For each configuration, we generate $500$ Monte Carlo replications. 
For each replication, the ground-truth $\mathrm{GTE}$ and
$\mathrm{GTE}^{\mathrm{ra}}$ are computed by separately simulating the
global treatment and global control regimes and taking the differences
in the corresponding metrics.

\subsection{Results}
\label{subsec:results}

Figure \ref{fig:barabasi-phi-GTE-estimate-mse} summarizes the performance of different GTE estimators as the disturbance parameter $\Delta$ varies. The left panel shows boxplots of the estimates. The dashed horizontal line marks the mean ground-truth GTE across simulations. The right panel reports the corresponding mean squared error (MSE). Overall, the proposed estimator is nearly unbiased and exhibits smaller bias across all settings. It achieves a lower MSE than DM and GCR, and its MSE is comparable to DM-FO. 

Moreover, using the asymptotic variance in Theorem~\ref{theorem:aa-normality},
we report the empirical 95\% coverage of nominal Wald intervals for the ground-truth GTE
in our A/B simulations in Table~\ref{tab:barabasi_coverage_pi05}.
Although the CLT is proved only under A/A tests,
coverage is close to the nominal level across the simulated settings,
supporting the robustness of Theorem~\ref{theorem:aa-normality} for A/B tests.

\begin{figure}[ht]
  \centering
  \begin{minipage}[t]{0.45\columnwidth}
    \centering
    \includegraphics[width=\linewidth]{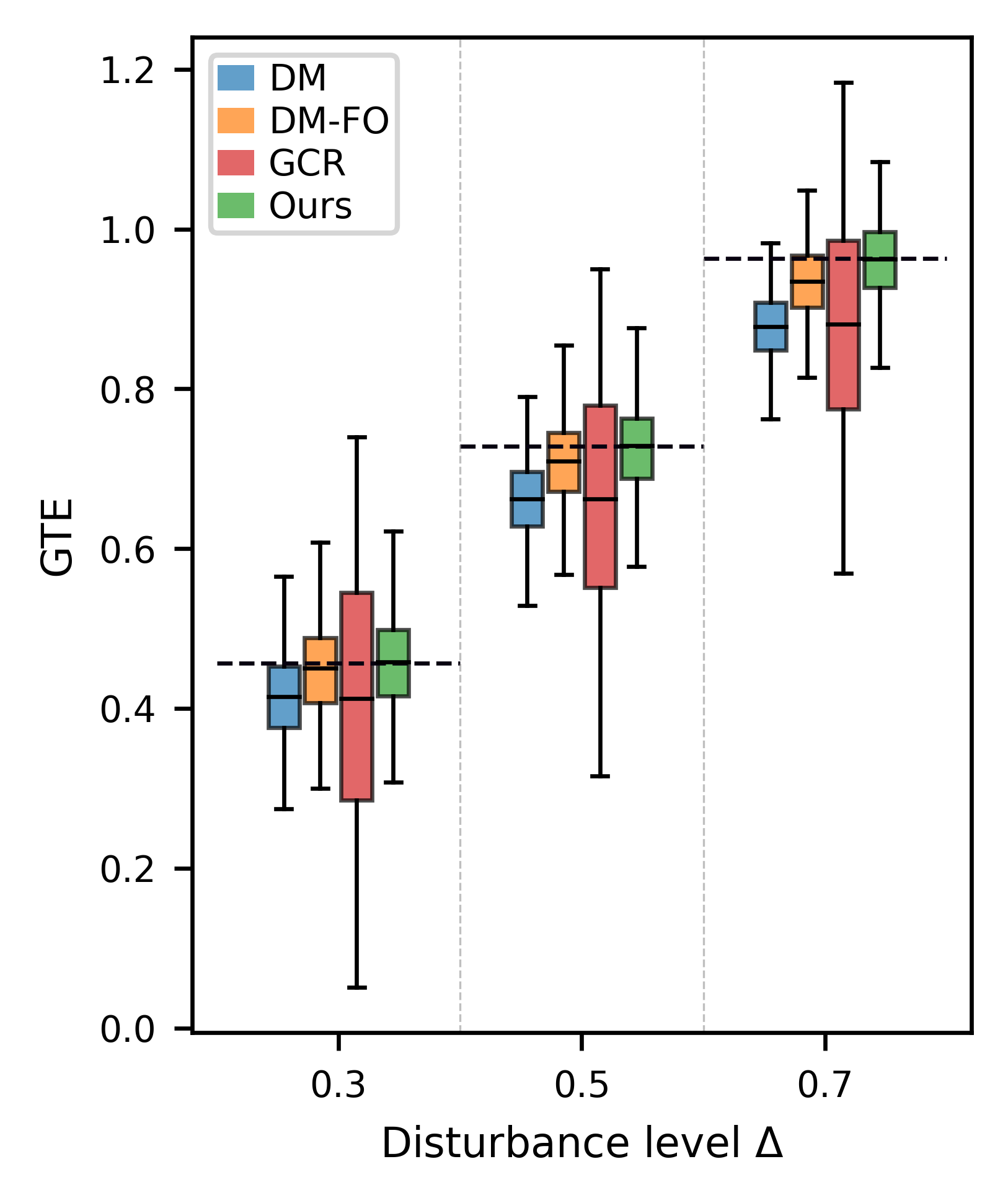}
    {\small (a) Boxplots of GTE Estimates}
  \end{minipage}\hfill
  \begin{minipage}[t]{0.45\columnwidth}
    \centering
    \includegraphics[width=\linewidth]{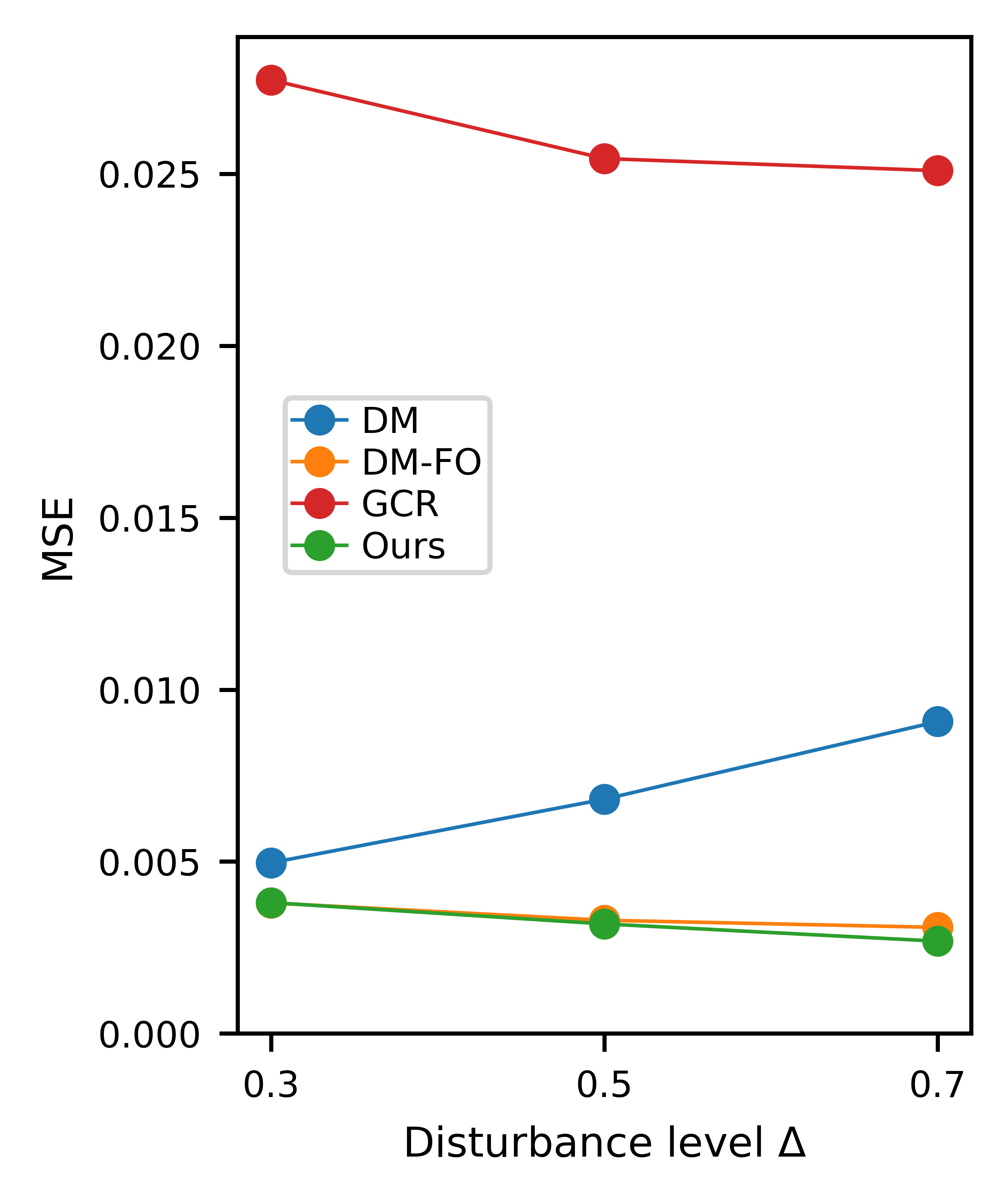}
    {\small (b) Mean squared error (MSE)}
  \end{minipage}
  \caption{Performance of different estimators for the impact of sharing under varying disturbance levels.}
  \label{fig:barabasi-phi-GTE-estimate-mse}
\end{figure}

\begin{table}[t]
\centering
\caption{Empirical 95\% coverage under $\pi=0.5$ and different disturbance parameters $\Delta$.}
\label{tab:barabasi_coverage_pi05}
\begin{tabular}{c ccc}
\toprule
$\Delta$ & 0.3 & 0.5 & 0.7 \\
\midrule
95\% coverage & 0.944 & 0.948 & 0.958 \\
\bottomrule
\end{tabular}

\end{table}

Figure~\ref{fig:barabasi-phi-GTE-ra-estimate-mse} reports the corresponding results for $\mathrm{GTE}^{\mathrm{ra}}$. Although the proposed estimator is visibly biased, its bias remains smaller than both EW and HEW, which results in a lower MSE than the two competing methods. Note that the MSE curves for EW and HEW overlap.

\begin{figure}[ht]
  \centering
  \begin{minipage}[t]{0.45\columnwidth}
    \centering
    \includegraphics[width=\linewidth]{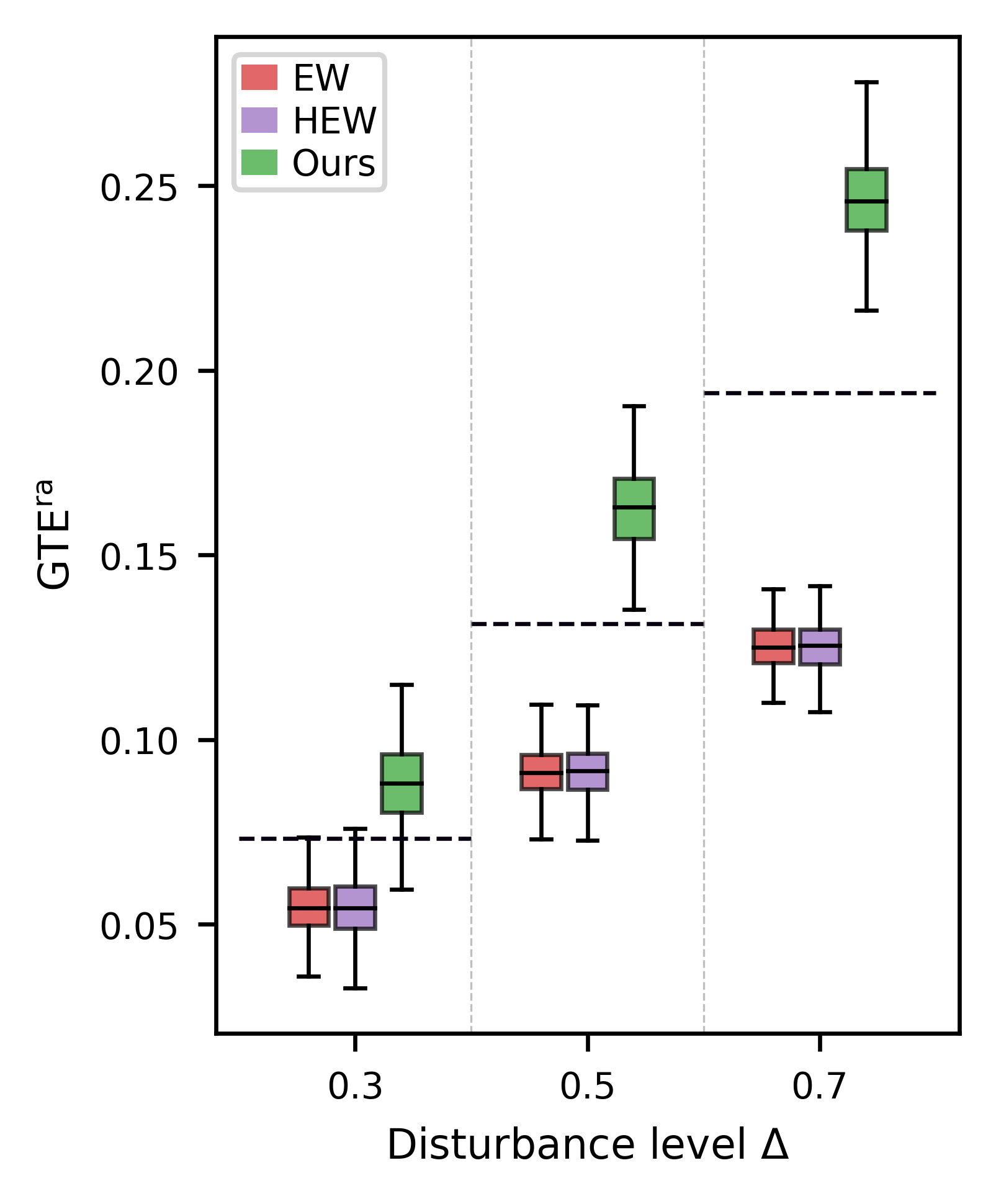}
    {\small (a) Boxplots of $\text{GTE}^{\text{ra}}$ Estimates}
  \end{minipage}\hfill
  \begin{minipage}[t]{0.45\columnwidth}
    \centering
    \includegraphics[width=\linewidth]{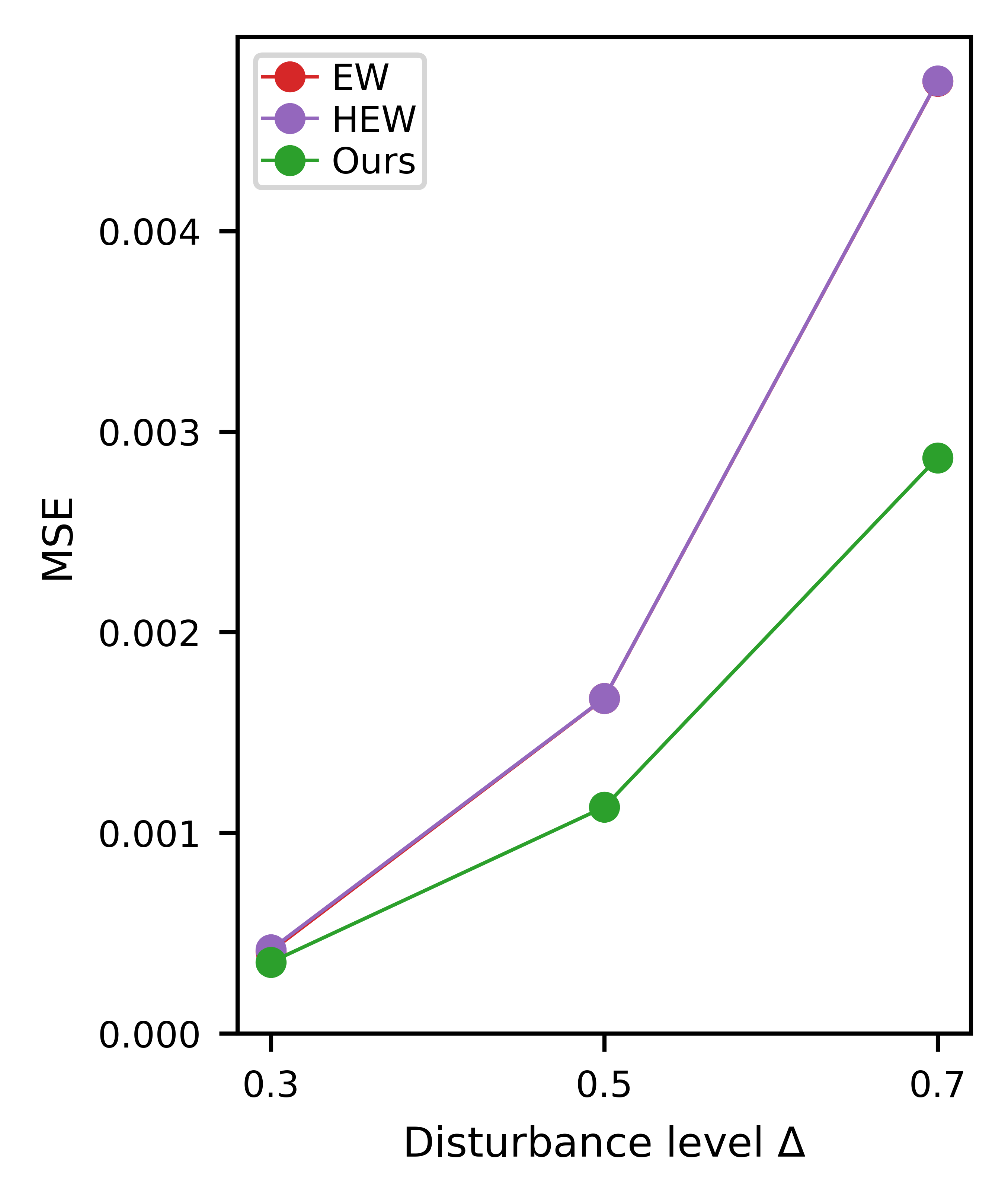}
    {\small (b) Mean squared error (MSE)}
  \end{minipage}
  \caption{Performance of different estimators for the reactivation rate  under varying disturbance levels.}
  \label{fig:barabasi-phi-GTE-ra-estimate-mse}
\end{figure}

\FloatBarrier

\section{Real-World Applications}
\label{sec:realworld}
We partnered with a major social networking platform with millions to billions of users
to evaluate the performance of our proposed methods in a real-world setting. The platform
continuously develops and deploys new sharing features, making reliable evaluation under
interference essential. 
Prior to implementing our methods, we followed the approach of \citet{han2023detect}
to test for the presence of interference. Empirical evidence from extensive experiments
shows that engagement-related metrics often exhibit amplification patterns as treatment
exposure increases, which is consistent with interference induced by social sharing and
motivates the development of our methods.

Our implementation leverages commonly available attribution logs to enable estimation of
the global treatment effect while accounting for multi-round diffusion. It is compatible
with Bernoulli randomization and modern online analytical processing engines, and can be
deployed automatically at scale.

We validate our approach using real data on a sharing-related metric from one channel,
a built-in feature that allows users to discover, watch, and share videos and live
streams. We first conduct an A/A test in which the treatment and control are identical.
We collect data from all users over one week and randomly assign 50\% of users to each
group. We then compute the proposed estimator and the corresponding $p$-values.

After 200 replicates, we examine whether the $p$-values approximately follow a uniform
distribution under the A/A test. Figure~\ref{fig:aa} shows that the $p$-values from the
proposed method are close to uniform. The remaining deviation is mainly due to the
heavy-tailed nature of sharing behavior, where a small number of users share a very large
amount of content with others.

\begin{figure}[h]
    \centering
    \includegraphics[width=0.55\linewidth, height=0.55\linewidth]{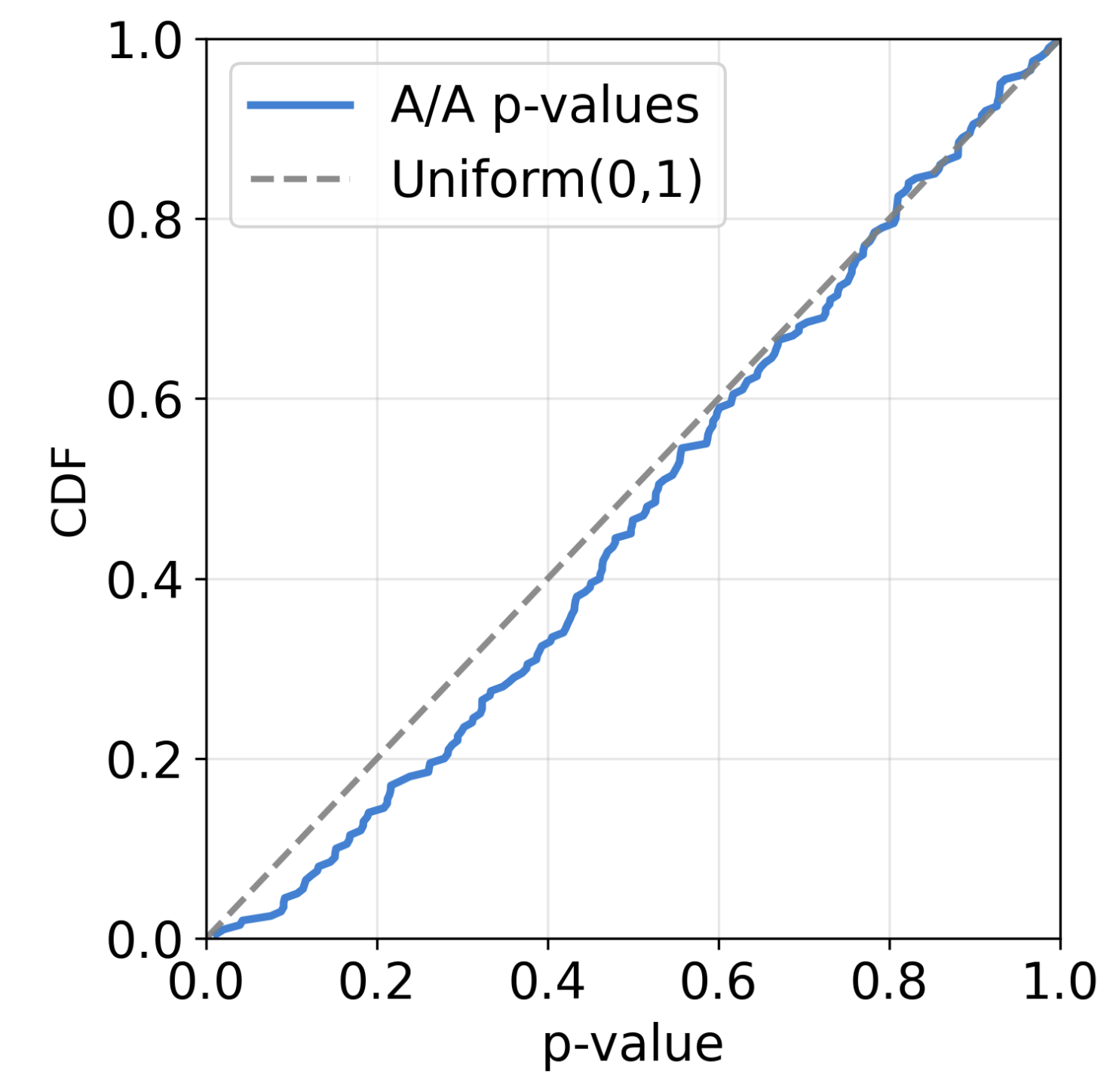}
    % \caption{$p$-value's distribution for 200 A/A experiments.}
    \caption{Distribution of \(p\)-values for 200 A/A experiments.}
    \label{fig:aa}
\end{figure}

We further demonstrate the practical value of our approach through a real A/B experiment.
In this experiment, the treatment group was exposed to a redesigned interface intended to
encourage sharing and downstream engagement with shared content, while the control group
retained the original design. We focus on the impact-of-sharing metrics (Eq. \ref{def:IS}). By construction, such sharing features can reactivate users
and propagate engagement through social connections over multiple rounds, making
interference unavoidable. 

Table~\ref{tab:abreal} shows that the conventional difference-in-means estimator and
first-order propagation metrics underestimate the true impact and yield statistically
insignificant results, whereas our estimator produces a statistically significant effect.
Based on our estimator’s results, the platform decided to launch the feature to all users,
and a comparison between pre-launch and post-launch outcomes further validates the
effectiveness of the feature. 
We also report a companion A/A experiment in Table~\ref{tab:abreal}, where the control
group is randomly split into two subgroups (A1 and A2), to further verify the validity
of the randomization. Together, these results highlight the importance of explicitly
accounting for flywheel-driven interference when evaluating sharing-feature designs on platforms.

\FloatBarrier

\begin{table}[htbp]
\caption{ A/A and A/B testing results on the platform.} 
% \vspace{-0.1in}
\begin{center}
{
\begin{tabular}{cccccccc}
  \toprule
  & Method & Evaluation & A/A & A/B \\
  \midrule
  & \multirow{2}*{Ours} &relative difference & 0.06\% & 0.548\% \\
                       & & $\mathsf{p\text{-}value}$  & 0.533 &
                         \textbf{0.005}  \\
                           \midrule
                       & \multirow{2}*{DM} & relative difference & 0.096\% & 0.293\% \\  % 62416 
                       & & $\mathsf{p\text{-}value}$  &  0.733  & 0.123  \\
                       \midrule
                       & \multirow{2}*{DM-FO} & relative difference & 0.035\% & 0.267\% \\ % 68876
                       & & $\mathsf{p\text{-}value}$  & 0.813 & 0.160 \\
  \bottomrule
\end{tabular}
}
\end{center}
\label{tab:abreal}
\end{table}

%\section{Conclusion}

\FloatBarrier

\begin{acks}
This work is funded by the Young Scientists Fund - Category C (Grant No. 72501243) from the Natural Science Foundation of China (NSFC) and Tencent Weixin Rhino-Bird Focused Research Program (Grant No. WXG-FR-2025-09).
\end{acks}

\phantomsection
\addcontentsline{toc}{section}{References}
\bibliographystyle{unsrtnat}
\bibliography{references}

\clearpage
\appendix
\section{Proofs}

\begin{proof}[Proof of Proposition \ref{prop:IS-finite}]
\label{subsec:proof-of-IS-finite}
Under Assumption \ref{assum:bounded}(2), \(\|Q^{(k)}\|_\infty\le \bar{c}<1\) for every \(k\). Hence a single discovery-driven view event of content \(k\) triggers, in expectation, at most
\[
\sum_{\ell\ge1}\|Q^{(k)}\|_\infty^\ell \le \frac{\bar{c}}{1-\bar{c}}
\]
share-induced view events over all later generations. Therefore
\[
\mathbb E\!\left[\sum_{i,k}N^s_{ik}(\infty)\right]
\le \frac{\bar{c}}{1-\bar{c}}\sum_{i,k}\mathbb E[N^d_{ik}(\infty)]
\le \frac{\bar{c}}{1-\bar{c}}\,n B,
\]
where the last inequality follows from Assumption \ref{assum:bounded}(1). Since \(O_{ik}=1\), this gives
\[
\mathrm{IS}=n^{-1}\mathbb E\!\left[\sum_{i,k}N^s_{ik}(\infty)\right]\le \frac{\bar{c} B}{1-\bar{c}} <\infty.
\]
\end{proof}

\begin{proof}[Proof of Proposition \ref{prop:refined-flow-balance-param-form}]
\label{subsec:proof-of-flow-balance-param-form}

Taking expectations in the flow-balance identity~\eqref{eq:flow-balance2}, and using the Hawkes cluster representation, we note that conditional on a parent event of type \(v\in\{d,s\}\) occurring at \((i,k)\), the number of its direct offspring sent to receiver \(j\) has conditional mean \(d^v_{ijk}\). Moreover, this offspring generation is independent across parent events.

Hence, by Tonelli's theorem,
\[
\mathbb E\!\left[\int_0^\infty D^v_{ijk}(t)\,dN^v_{ik}(t)\right]
=d^v_{ijk}\,\mathbb E[N^v_{ik}(\infty)].
\]
For \(v=d\), this equals \(w^d_{ik}d^d_{ijk}\); for \(v=s\), it equals \(w^s_{ik}d^s_{ijk}\). Substituting these two identities into the expectation of (\ref{eq:flow-balance2}) and using
\(q_{ik}=\sum_j d^s_{ijk}\) gives
\[
\sum_{i,j,k}w^d_{ik}d^d_{ijk}
=\sum_{i,k}w^s_{ik}(1-q_{ik}),
\]
as claimed. 
\end{proof}

\begin{proof}[Proof of Theorem \ref{theorem:consistency}]
\label{subsec:proof-theorem-consistency}
% Proof sketch
 The proof is outlined as follows: We first express the GTE in terms of discovery-origin views and the effective downstream sharing rate. Then, we show that the sample averages of discovery-origin counts and the plug-in downstream rates converge in probability to their expectations. Finally, the continuous mapping theorem gives the consistency of the plug-in GTE estimator.
 
For \(a\in\{T,C\}\), define
\[
\mu_a^d
:=
\frac1n\sum_{i=1}^n\sum_{j=1}^n\sum_{k=1}^m
w_{ik}^d d_{ijk}^{d,a}.
\]
Under Assumption~\ref{assum:homogeneity}, \(q_{ik}^a\equiv q^a\). Hence,
by Proposition~\ref{prop:refined-flow-balance-param-form},
\[
\mathrm{IS}^a=\frac{\mu_a^d}{1-q^a},
\qquad
\mathrm{GTE}
=
\frac{\mu_T^d}{1-q^T}
-
\frac{\mu_C^d}{1-q^C}.
\]
Notice that
\[
\frac{\widehat Y_a^d}{1-\widehat q^a}
-
\frac{\mu_a^d}{1-q^a}
=
\frac{\widehat Y_a^d-\mu_a^d}{1-q^a}
+
\frac{\widehat Y_a^d(\widehat q^a-q^a)}
{(1-\widehat q^a)(1-q^a)}.
\]
Since \(q^a\le \bar c<1\), it is enough to show, for each
\(a\in\{T,C\}\), that
\[
\widehat Y_a^d-\mu_a^d=o_p(1),
\qquad
\widehat Y_a^d=O_p(1),
\qquad
\widehat q^a-q^a=o_p(1).
\]

We only consider the treatment case. The control case is identical. Let
$
\widetilde Y_T^d
=
n_T/(n\pi_T)\widehat Y_T^d .
$
Because the intervention changes sender-side propagation only, we have \(\mathbb E[\widetilde Y_T^d]=\mu_T^d\).

We next bound the variance. Let \(W_i^d:=\sum_{k=1}^mN_{ik}^d(\infty)\).
By Assumption~\ref{assum:bounded}(1), \(W_i^d\) is Poisson with mean at most
\(B\). Conditional on the discovery-origin views generated by user \(i\),
the number of direct share-induced views generated from these views has
conditional mean at most
\[
\sum_{k=1}^mN_{ik}^d(\infty)\sum_{j=1}^n d_{ijk}^{d,T}
\le
\bar c\, W_i^d .
\]
Therefore,
$
\sup_i\mathbb E[(Y_i^d)^2]\le B^2+2B<\infty,
$
and the independence of discovery-origin views and their direct offspring
across senders gives
\[
\operatorname{Var}(\widetilde Y_T^d)
\le
\frac{1}{n^2\pi_T^2}
\sum_{i=1}^n \mathbb E[(Y_i^d)^2]
=O(n^{-1}).
\]
Thus \(\widetilde Y_T^d-\mu_T^d=o_p(1)\). Moreover, 
$
\mu_T^d
\le
\bar c B,
$
so \(\widetilde Y_T^d=O_p(1)\). Since
\(n_T/(n\pi_T)\xrightarrow{p}1\),
\[
\widehat Y_T^d-\mu_T^d
=
\left(\frac{n \pi_{T}}{n_T}-1\right)\widetilde Y_T^d
+
(\widetilde Y_T^d-\mu_T^d)
=o_p(1).
\]
Besides, \(\widehat Y_T^d=O_p(1)\).

It remains to prove \(\widehat q^T-q^T=o_p(1)\). Write
$
F_{n,T}:=\sum_{i=1}^n Z_iV_i W_i^s,$ 
$
G_{n,T}:=\sum_{i=1}^n Z_iV_i Y_i^s .
$
Then \(\widehat q^T=G_{n,T}/F_{n,T}\) on \(\{F_{n,T}>0\}\), with the
convention \(\widehat q^T=0\) when \(F_{n,T}=0\).

Define
\[
H_{n,T}
:=
\sum_{i=1}^n\sum_{j=1}^n\sum_{k=1}^m
Z_iV_i Z_jV_j
\int_0^\infty D_{ijk}^d(t)\,dN_{ik}^d(t).
\]
Every event counted by \(H_{n,T}\) is a share-induced view received by a
treated user. Therefore,
$
H_{n,T}\le F_{n,T}.
$
Conditional on the assignment and the discovery-origin view processes,
\(H_{n,T}\) is Poisson with mean
\[
\Lambda_{n,T}
:=
\sum_{i=1}^n\sum_{j=1}^n\sum_{k=1}^m
Z_iV_i Z_jV_j N_{ik}^d(\infty)d_{ijk}^{d,T}.
\]
Since conditional on the assignment,
\[
\begin{aligned}
&\operatorname{Var}\!\left(
\sum_{i,j,k}Z_iV_i Z_jV_j
\{N_{ik}^d(\infty)-w_{ik}^d\}d_{ijk}^{d,T}
\,\middle|\,\{Z_i,V_{i}\}_{i=1}^n
\right)  \\
&\qquad\le
\sum_{i,k}w_{ik}^d
\left(\sum_j d_{ijk}^{d,T}\right)^2
\le
\bar c^2\sum_{i,k}w_{ik}^d
\le
\bar c^2Bn .
\end{aligned}
\]
we have
\[
\frac{\Lambda_{n,T}}{n}
-
\frac1n\sum_{i,j,k}Z_iV_i Z_jV_j w_{ik}^d d_{ijk}^{d,T}
=o_p(1).
\]

Let
$
\mu_{ij,T}^d:=\sum_{k=1}^m w_{ik}^d d_{ijk}^{d,T}.
$
Since \(d_{iik}^{d,T}=0\), \(\mu_{ii,T}^d = 0\). Moreover,
\[
\max_i\sum_j \mu_{ij,T}^d\le \bar c B,
\qquad
\max_j\sum_i \mu_{ij,T}^d\le B.
\]
The second inequality is exactly the bound in
Assumption~\ref{assum:bounded}(1), while the first follows from Assumptions~\ref{assum:bounded}(1)--(2). A covariance expansion for the
Bernoulli quadratic form gives
\[
\begin{aligned}
& \operatorname{Var}\!\left(\sum_{i,j}Z_iV_i Z_jV_j \mu_{ij,T}^d\right)\\
& \le
C_\pi\left\{
\sum_{i,j}(\mu_{ij,T}^d)^2
+
\sum_i\left(\sum_j\mu_{ij,T}^d\right)^2
+
\sum_j\left(\sum_i \mu_{ij,T}^d\right)^2
\right\} =O(n),
\end{aligned}
\]
where \(C_\pi<\infty\) depends only on \(\pi_T\). Therefore,
\[
\frac1n\sum_{i,j}Z_iV_{i} Z_{j}V_{j} \mu_{ij,T}^d
-
\pi_T^2\frac1n\sum_{i,j}\mu_{ij,T}^d
=o_p(1).
\]
By Assumption~\ref{assum:bounded}(3),
\[
\frac1n\sum_{i,j}\mu_{ij}^T
=
\frac1n\sum_{i,j,k}w_{ik}^d d_{ijk}^{d,T}
\ge \underline c .
\]
Thus
\[
\mathbb P\!\left(
\frac{\Lambda_{n,T}}{n}
\ge
\frac{\pi_T^2\underline c}{2}
\right)\to1.
\]
Since \(H_{n,T}\mid \Lambda_{n,T}\sim\operatorname{Poisson}(\Lambda_{n,T})\)
and \(\mathbb E[\Lambda_{n,T}]=O(n)\), we have
\[
\frac{H_{n,T}-\Lambda_{n,T}}{n}=o_p(1).
\]
Consequently,
\[
\mathbb P\!\left(
F_{n,T}\ge \frac{\pi_T^2\underline c}{4}n
\right)
\ge
\mathbb P\!\left(
H_{n,T}\ge \frac{\pi_T^2\underline c}{4}n
\right)
\to1 .
\]

Under the cluster representation of the Hawkes process, enumerate all
share-induced views in generation order. Let \((i_r,k_r)\),
\(r=1,\ldots,\tau_n\), denote the user--content pair associated with the
\(r\)-th share-induced view, where \(\tau_n\) is the total number of
share-induced views. Let \(L_r\) be the number of direct offspring
share-induced views generated by this \(r\)-th share-induced view. Then
$
F_{n,T}=\sum_{r=1}^{\tau_n} Z_{i_r}V_{i_r},$ 
$
G_{n,T}=\sum_{r=1}^{\tau_n} Z_{i_r}V_{i_r}L_r .
$

Let \(\mathcal G_{n,r-1}\) be the \(\sigma\)-field containing the experimental
assignments, all discovery-origin views, and all share-induced views up to
the \(r\)-th share-induced view, but not the offspring generated by the
\(r\)-th share-induced view. On the event \(\{Z_{i_r}V_{i_r}=1\}\), the
sender is treated. Hence Assumption~\ref{assum:homogeneity} and the Hawkes
cluster construction imply
$
\mathbb E[L_r\mid \mathcal G_{n,r-1}]=q^T, \ 
\operatorname{Var}(L_r\mid \mathcal G_{n,r-1})=q^T .
$
Therefore,
$
D_{n,r}:=\mathbf 1\{r\le \tau_n\}Z_{i_r}V_{i_r}(L_r-q^T)
$
is a martingale-difference array. Define
\[
M_{n,T}:=\sum_{r\ge1}D_{n,r}=G_{n,T}-q^T F_{n, T} .
\]
Moreover,
\[
\mathbb E[M_{n,T}^2]
=
\sum_{r\ge1}\mathbb E[D_{n,r}^2]
\le
q^T \sum_{r\ge1}\mathbb E[\mathbf 1\{r\le \tau_n\}Z_{i_r}V_{i_r}]
\le Cn,
\]
where $C$ is a constant independent of \(n\) and the last inequality follows from Proposition~\ref{prop:IS-finite}.

Let \(\eta_T:=\pi_T^2\underline c/4\). For any \(\varepsilon>0\),
\[
\begin{aligned}
\mathbb P(|\widehat q^T-q^T|>\varepsilon)
&\le
\mathbb P(F_{n,T}<\eta_T n)
+
\mathbb P(|M_{n,T}|>\varepsilon\eta_T n)\\
&\le
o(1)
+
\frac{\mathbb E[M_{n,T}^2]}
{\varepsilon^2\eta_T^2n^2}
=o(1).
\end{aligned}
\]
Thus \(\widehat q^T-q^T=o_p(1)\).

The proof for the control setting is similar. Thus, $\widehat{\text{GTE}} - \text{GTE} \overset{p}{\to} 0$.
\end{proof}

\begin{proof}[Proof of Theorem~\ref{theorem:aa-normality}]
\label{subsec:proof-theorem-aa}

% Proof sketch for theorem 5.1
The proof is outlined as follows: using the Hawkes cluster representation together with Assumption \ref{assum:bounded}, we obtain uniform fourth-moment bounds for $U_i$, ensuring that the remainder terms in the mean-value expansion of $g(U)$ are negligible. A finite-population central limit theorem for complete randomization then implies asymptotic normality of the linear term, and Slutsky's theorem yields the asymptotic normality of the estimator with consistent plug-in standard error.

Under A/A testing, 
$
\bm U_i=(Y_i^d,Y_i^s,W_i^s)^\top
$
does not depend on \(Z_i\).  We first record a moment implication of Assumption~\ref{assum:bounded}. By the Hawkes cluster representation \citep{hawkes1974cluster} and standard moment bounds for subcritical branching processes \citep{athreya2012branching}, the conditions in
Assumption~\ref{assum:bounded}(1)--(2) imply that there exists a constant
\(C<\infty\), independent of \(n\), such that
\[
\sup_{1\le i\le n}\mathbb E\|\bm U_i\|^4\le C .
\]
Consequently,
\[
n^{-1} \sum_{i=1}^n \|\bm U_i\|^4=O_p(1),
\qquad
\max_{1\le i\le n}\|\bm U_i\|=o_p(n^{1/2}).
\]

Recall that
$
\widehat{\bm U}_T
=
(\widehat Y_T^d,\widehat Y_T^s,\widehat W_T^s)^\top,
\widehat{\bm U}_C
=
(\widehat Y_C^d,\widehat Y_C^s,\widehat W_C^s)^\top,
$
and
$
\widehat{\mathrm{GTE}}
=
g(\widehat{\bm U}_T)-g(\widehat{\bm U}_C),
g(u_1,u_2,u_3)={u_1}/{(1-u_2/u_3)}.
$
% Let \(\gamma_V=\nabla g(\bar{\bm U}_V)\).

We next show that \(\nabla g\) is evaluated away from its singularity. By the flow-balance identity, after summing over all users,
$
\sum_{i=1}^n (W_i^s-Y_i^s)=\sum_{i=1}^n Y_i^d .
$
Moreover,
$
\mathbb E[Y_i^d]
=
\sum_{k=1}^m\sum_{j=1}^n w_{ik}^d d_{ijk}^d .
$
The variables \(\{Y_i^d\}_{i=1}^n\) are independent across users and have
uniformly bounded second moments. Hence, by
Assumption~\ref{assum:bounded}(3),
\[
\mathbb P\!\left(
\frac1n\sum_{i=1}^n(W_i^s-Y_i^s)\ge \frac{\underline c}{2}
\right)
=
\mathbb P\!\left(
\frac1n\sum_{i=1}^nY_i^d\ge \frac{\underline c}{2}
\right)
\to 1 .
\]
Conditional on \(\{W_i^s-Y_i^s\}_{i=1}^n\),
\[
\mathbb E\!\left[
\left.
\left\{
\frac1n\sum_{i=1}^n(V_i-\pi)(W_i^s-Y_i^s)
\right\}^2
\right|
\{W_i^s-Y_i^s\}_{i=1}^n
\right]
\le
\frac{C}{n^2}\sum_{i=1}^n(W_i^s-Y_i^s)^2 .
\]
The fourth-moment bound above implies
\(n^{-1}\sum_{i=1}^n(W_i^s-Y_i^s)^2=O_p(1)\). Therefore,
\[
\frac1n\sum_{i=1}^n V_i(W_i^s-Y_i^s)
=
\pi\frac1n\sum_{i=1}^n(W_i^s-Y_i^s)+o_p(1).
\]
Together with \(n_V/n\xrightarrow{p}\pi\), we have
$
\mathbb P\!\left(
\frac1{n_V}\sum_{i=1}^n V_i(W_i^s-Y_i^s)
\ge \frac{\underline c}{4}
\right)\to 1 .
$
That is, $\mathbb P\!\left(
(\bar{\bm U}_{V})_{3}-(\bar{\bm U}_{V})_{2}
\ge \frac{\underline c}{4}
\right)\to 1, $ where \((\bm u)_\ell\) denotes the \(\ell\)-th component of a vector \(\bm u\).
The same argument yields analogous lower bounds for the treatment and control averages.

By a mean-value expansion of $g$ around $\bar{\bm U}_V$, there exist intermediate points
$\widetilde{\bm U}_a$, $a \in \{T,C\}$ on the line segment between $\widehat{\bm U}_a$ and $\bar{\bm U}_V$ such that
\begin{equation}
\begin{aligned}
    \label{eq:lin-g}
    & g(\widehat{\bm U}_a)-g(\bar{\bm U}_V)
    =
    \nabla g(\widetilde{\bm U}_a)^\top(\widehat{\bm U}_a-\bar{\bm U}_V).
\end{aligned}
\end{equation}
Hence,
\begin{equation}
\begin{aligned}
\label{eq:lin-h}
\widehat{\mathrm{GTE}}
& = \gamma_{V}^\top \widehat{\bm U}_T 
- \gamma_{V}^\top \widehat{\bm U}_C
+ R_n,
\end{aligned}
\end{equation}
where the remainder term is
\[
R_n
:=
\bigl(\nabla g(\widetilde{\bm U}_T)-\gamma_{V}\bigr)^\top(\widehat{\bm U}_T-\bar{\bm U}_V)
-
\bigl(\nabla g(\widetilde{\bm U}_C)-\gamma_{V}\bigr)^\top(\widehat{\bm U}_C-\bar{\bm U}_V).
\]
Since \(n_V/n\to_p\pi>0\), \(n_T/n\to_p\pi p>0\), and
\(n_C/n\to_p\pi(1-p)>0\), all sample sizes are of order \(n\) with
probability tending to one. Conditional on 
\[
\mathcal H_n
:=
\sigma\bigl(\{V_i\}_{i=1}^n,n_T,n_C,\{\bm U_i\}_{i=1}^n\bigr),
\]
the treatment assignment is complete randomization over the \(n_V=n_T+n_C\)
exposed users \cite{ding2024first}.
By the randomization variance bound and
\(n^{-1}\sum_{i=1}^n\|\bm U_i\|^2=O_p(1)\), we have
$
\|\widehat{\bm U}_T-\bar{\bm U}_V\|=O_p(n^{-1/2}), 
\|\widehat{\bm U}_C-\bar{\bm U}_V\|=O_p(n^{-1/2}).
$
Since the denominators in \(\nabla g\) are
bounded away from zero with probability tending to one,
\[
\|\nabla g(\widetilde{\bm U}_T)-\gamma_V\|=o_p(1),
\qquad
\|\nabla g(\widetilde{\bm U}_C)-\gamma_V\|=o_p(1).
\]
Thus \(R_n=o_p(n^{-1/2})\). The non-degeneracy condition implies 
\(R_n=o_p({\mathrm{se}})\).

It remains to establish the central limit theorem for the linear term. 
Previous results imply \(\|\gamma_V\|=O_p(1)\). Moreover, since
$
\mathrm{se}^2
=
{n_V}{(n_Tn_C)^{-1}}\gamma_V^\top\Sigma_V\gamma_V,
$
the non-degeneracy condition and
\(n_T/n\to\pi p\), \(n_C/n\to\pi(1-p)\), \(n_V/n\to\pi\) imply that
\(\gamma_V^\top\Sigma_V\gamma_V\) is bounded away from zero with probability
tending to one. Hence,
\[
\frac{
\max_{i:V_i=1}
\left\{\gamma_V^\top(\bm U_i-\bar{\bm U}_V)\right\}^2
}{
\min(n_T,n_C)\gamma_V^\top\Sigma_V\gamma_V
}
=o_p(1).
\]
The finite-population central limit theorem for complete randomization then gives,
conditionally on \(\mathcal H_n\),
\[
\frac{
\gamma_V^\top\widehat{\bm U}_T
-
\gamma_V^\top\widehat{\bm U}_C
}{
\mathrm{se}
}
\xrightarrow{d}
\mathcal N(0,1).
\]
Combining this display with \(R_n=o_p(\mathrm{se})\) and Slutsky's theorem yields
\[
\frac{\widehat{\mathrm{GTE}}}{\mathrm{se}}
=
\frac{
\gamma_V^\top\widehat{\bm U}_T
-
\gamma_V^\top\widehat{\bm U}_C
}{
\mathrm{se}
}
+
\frac{R_n}{\mathrm{se}}
\xrightarrow{d}
\mathcal N(0,1).
\]

Finally, the same finite-population law of large numbers gives
$
\widehat{\Sigma}_T-\Sigma_V=o_p(1),
\widehat{\Sigma}_C-\Sigma_V=o_p(1),
$
under the randomization distribution~\cite{li2017general}.
Therefore,
\[
\widehat{\mathrm{se}}^{\,2}
=
\left(\frac1{n_T}+\frac1{n_C}\right)
\gamma_V^\top\Sigma_V\gamma_V
\{1+o_p(1)\}
=
\mathrm{se}^2\{1+o_p(1)\}.
\]
Thus,
$
{\widehat{\mathrm{se}}}/{\mathrm{se}}\xrightarrow{p}1.
$
Another application of Slutsky's theorem gives
$
{\widehat{\mathrm{GTE}}}/{\widehat{\mathrm{se}}}
\xrightarrow{d}
\mathcal N(0,1).
$
Since the conditional limiting distribution does not depend on \(\mathcal H_n\),
the same convergence also holds unconditionally.
\end{proof}

\section{Additional Simulation Studies}
\label{sec:appendix-more-exps}

This section reports additional simulation results under a range of settings to assess the robustness and sensitivity of our method. Unless otherwise stated, we use the setting from Section~\ref{subsec:sim-setup} as the default setting: a Barab\'asi--Albert graph with average degree $50$, $n=50{,}000$ users, $m=2{,}000$ contents, treatment probability $p=0.5$, exposure probability $\pi=0.5$, the default parameter distribution, and disturbance level $\Delta=0.5$. For one-dimensional parameter sweeps, all non-swept parameters are fixed at their default values.
Because the appendix considers many additional settings, we use 100 Monte Carlo replications per setting and all reported MSEs are averages over these replications.

\subsection{Sensitivity to Exposure Probability}
\label{subsec:appendix-pi-sweep}

We first vary the exposure probability $\pi\in\{0.2,0.5,0.8\}$. This experiment examines the sensitivity of our method to the fraction of exposed users. Figure~\ref{fig:barabasi-pi-GTE-estimate-mse} shows that the proposed estimator remains close to the true GTE and achieves the lowest or nearly lowest MSE across all settings. The improvement is clear compared with DM and GCR. Compared with DM-FO, the proposed estimator is competitive in bias and slightly better in MSE.

\begin{figure}[!htbp]
  \centering
  \begin{minipage}[t]{0.49\textwidth}
    \centering
    \includegraphics[width=\linewidth]{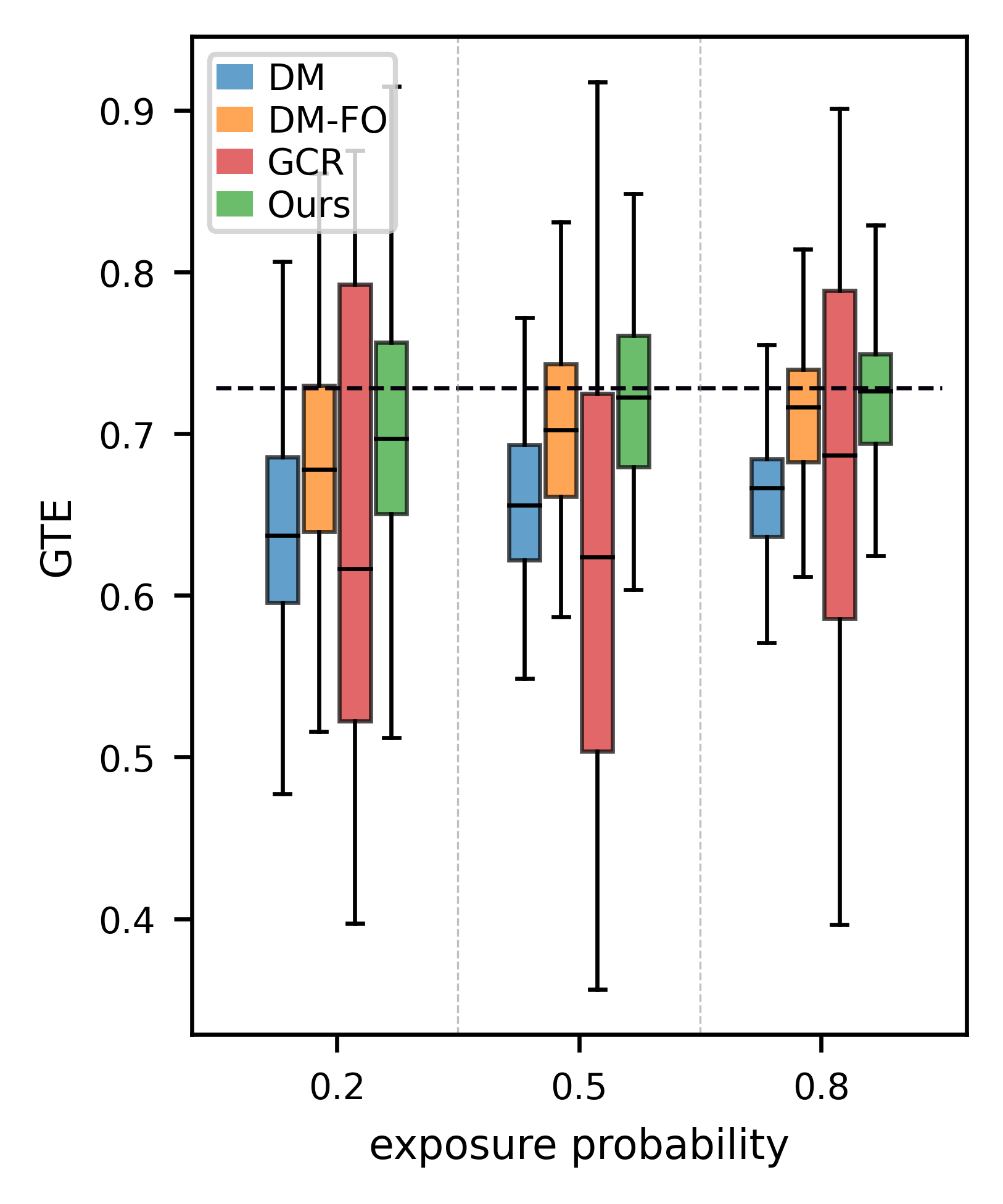}
    {\small (a) Boxplots of GTE estimates}
  \end{minipage}\hfill
  \begin{minipage}[t]{0.49\textwidth}
    \centering
    \includegraphics[width=\linewidth]{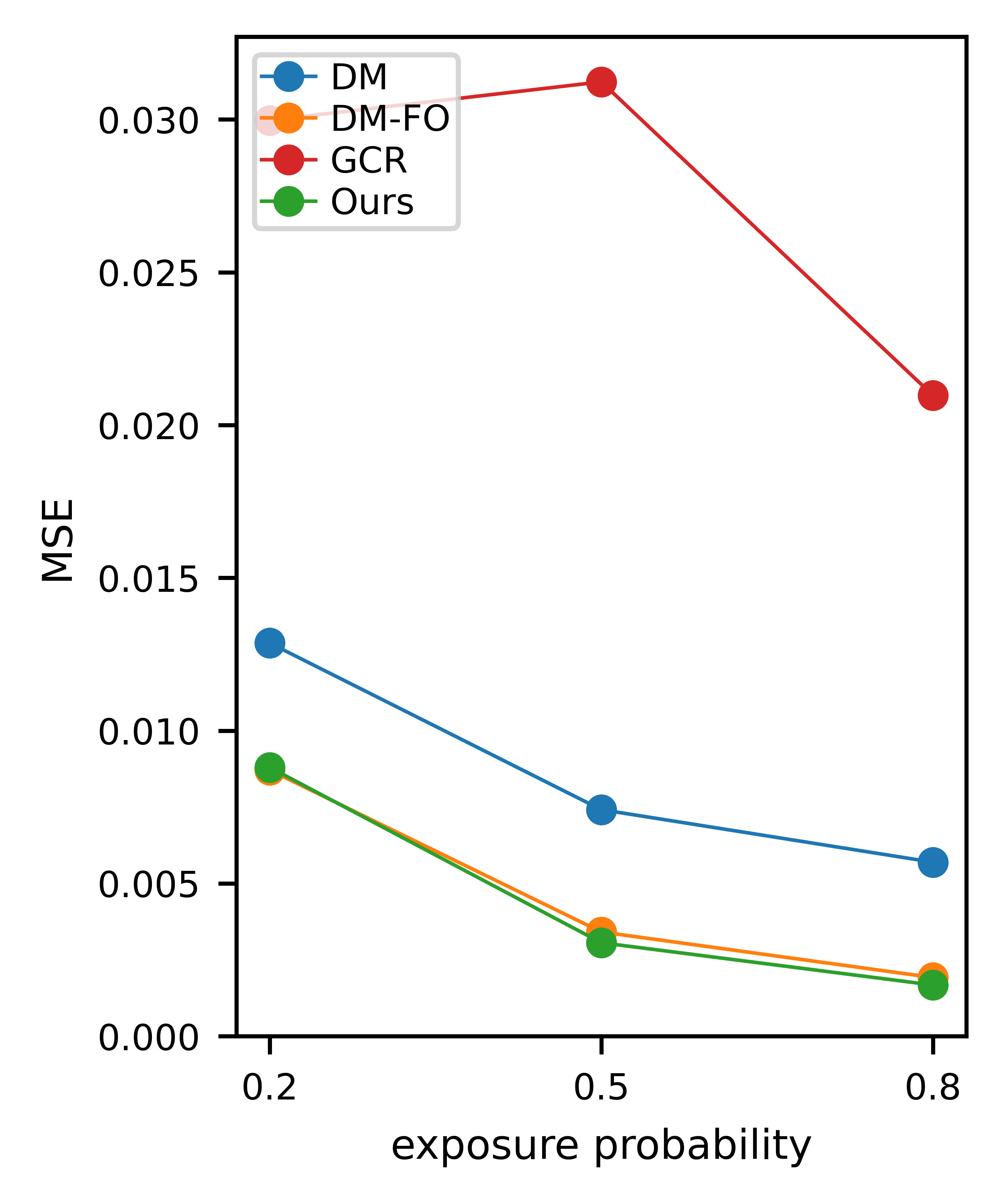}
    {\small (b) Mean squared error (MSE)}
  \end{minipage}
  \caption{Performance of different estimators for the impact of sharing under varying exposure probability.}
  \label{fig:barabasi-pi-GTE-estimate-mse}
\end{figure}

\subsection{Robustness to Stronger Receiver-Side Propagation}
\label{subsec:appendix-more-shares}

We further examine a setting with stronger receiver-side propagation. In this setting, the receiver-side propagation parameter $\varphi_j^{d,T}$ is sampled from $U(0,0.5)$ instead of $U(0,0.2)$, with $\varphi_j^{s,T}=\varphi_j^{d,T}$ as in the default setting. This creates a regime with stronger propagation effects and stronger interference. Table~\ref{tab:appendix-more-shares} reports the MSE of each estimator. The proposed estimator achieves the lowest MSE across all disturbance levels, and its advantage over all three baseline estimators becomes more pronounced in this stronger propagation setting.

\begin{table}[!htbp]
\centering
\small
\caption{MSE of different estimators under stronger receiver-side propagation}
\label{tab:appendix-more-shares}
\begin{tabular}{lccc}
\toprule
Method & $\Delta=0.3$ & $\Delta=0.5$ & $\Delta=0.7$ \\
\midrule
DM    & 0.2317 & 0.4213 & 0.6090 \\
DM-FO & 0.1044 & 0.1270 & 0.1770 \\
GCR   & 0.6451 & 0.7715 & 0.9011 \\
Ours  & \textbf{0.0880} & \textbf{0.0691} & \textbf{0.0650} \\
\bottomrule
\end{tabular}
\end{table}

\subsection{Robustness to Network Density and Scale}
\label{subsec:appendix-network-scale-degree}

We next vary the average degree and the number of users on the Barab\'asi--Albert graph to examine sensitivity to network density and scale. Tables~\ref{tab:appendix-ba-degree}--\ref{tab:appendix-ba-n} report the MSE of each estimator. The proposed estimator achieves the lowest MSE across all degree settings and remains best or nearly best across different network sizes. DM-FO is the closest competitor, while GCR is less stable.

\begin{table}[!htbp]
\centering
\small
\caption{MSE of different estimators with varying average degree.}
\label{tab:appendix-ba-degree}
\begin{tabular}{lccc}
\toprule
Method & $\bar g$=20 & $\bar g$=50 & $\bar g$=100 \\
\midrule
DM    & 0.0099 & 0.0074 & 0.0064 \\
DM-FO & 0.0042 & 0.0034 & 0.0026 \\
GCR   & 0.1921 & 0.0312 & 0.0055 \\
Ours  & \textbf{0.0036} & \textbf{0.0031} & \textbf{0.0022} \\
\bottomrule
\end{tabular}
\end{table}

\begin{table}[!htbp]
\centering
\small
\caption{MSE of different estimators with varying number of users.}
\label{tab:appendix-ba-n}
\begin{tabular}{lccc}
\toprule
Method & $n=10{,}000$ & $n=20{,}000$ & $n=50{,}000$ \\
\midrule
DM    & 0.0139 & 0.0085 & 0.0074 \\
DM-FO & 0.0113 & \textbf{0.0058} & 0.0034 \\
GCR   & 0.0154 & 0.0169 & 0.0312 \\
Ours  & \textbf{0.0112} & 0.0060 & \textbf{0.0031} \\
\bottomrule
\end{tabular}
\end{table}

\subsection{Robustness to Parameter Distributions}
\label{subsec:appendix-parameter-distributions}

The simulations in Section \ref{subsec:sim-setup} draw the base parameters $a_i^d$, $b_k^d$, $\phi_i^{d,T}$, $\varphi_j^{d,T}$, $\theta_{k}^{d,T}$ from uniform distributions. To check whether the results are driven by this specific distributional choice, we replace the uniform distributions with lognormal and gamma distributions calibrated to have the same means and variances. After drawing these base parameters, the remaining parameters are generated using the same transformations as in the default simulation setup. These alternatives allow a small fraction of users or contents to have much larger parameter values, thereby introducing stronger heterogeneity than the uniform specification. Tables~\ref{tab:appendix-lognormal}--\ref{tab:appendix-gamma} show that the qualitative ordering remains stable. This suggests that the performance gain is not driven by the specific uniform parameter specification.

\begin{table}[!htbp]
\centering
\small
\caption{MSE of different estimators under lognormal parameter distributions.}
\label{tab:appendix-lognormal}
\begin{tabular}{lccc}
\toprule
Method & $\Delta=0.3$ & $\Delta=0.5$ & $\Delta=0.7$ \\
\midrule
DM    & 0.0071 & 0.0083 & 0.0110 \\
DM-FO & 0.0059 & 0.0050 & 0.0057 \\
GCR   & 0.0139 & 0.0131 & 0.0157 \\
Ours  & \textbf{0.0057} & \textbf{0.0048} & \textbf{0.0048} \\
\bottomrule
\end{tabular}
\end{table}

\begin{table}[!htbp]
\centering
\small
\caption{MSE of different estimators under gamma parameter distributions.}
\label{tab:appendix-gamma}
\begin{tabular}{lccc}
\toprule
Method & $\Delta=0.3$ & $\Delta=0.5$ & $\Delta=0.7$ \\
\midrule
DM    & 0.0059 & 0.0084 & 0.0104 \\
DM-FO & \textbf{0.0045} & 0.0044 & 0.0046 \\
GCR   & 0.0379 & 0.0358 & 0.0367 \\
Ours  & \textbf{0.0045} & \textbf{0.0043} & \textbf{0.0039} \\
\bottomrule
\end{tabular}
\end{table}

\subsection{Robustness to Graph Topology}
\label{subsec:appendix-graph-topology}

We also consider graph topologies beyond the Barab\'asi--Albert graph. Specifically, we consider Watts--Strogatz graphs~\cite{watts1998collective} with different rewiring probabilities $p_{\mathrm{rew}}$ and stochastic block models~\cite{holland1983stochastic} with different numbers of blocks $K$. For the stochastic block graphs,
the between-block connection probability is set to be a fixed fraction of the
within-block connection probability, while the expected degree is kept fixed at the default value $\bar g=50$.
Thus, these experiments mainly vary the local clustering and community
structure of the network, rather than the overall graph density.
Tables~\ref{tab:appendix-ws}--\ref{tab:appendix-sbm} report the MSE of each estimator. The proposed estimator achieves the lowest or nearly lowest MSE in most settings. DM-FO is again a close competitor, while GCR performs well in some cases but is less stable across graph structures.

\begin{table}[!htbp]
\centering
\small
\caption{MSE of different estimators on Watts--Strogatz graphs with varying rewiring probability $p_{\mathrm{rew}}$.}
\label{tab:appendix-ws}
\begin{tabular}{lccc}
\toprule
Method & $p_{\mathrm{rew}}=0.05$ & $p_{\mathrm{rew}}=0.10$ & $p_{\mathrm{rew}}=0.30$ \\
\midrule
DM    & 0.00119 & 0.00113 & 0.00122 \\
DM-FO & 0.00064 & 0.00064 & 0.00072 \\
GCR   & 0.00084 & 0.00118 & \textbf{0.00065} \\
Ours  & \textbf{0.00058} & \textbf{0.00058} & 0.00068 \\
\bottomrule
\end{tabular}
\end{table}

\begin{table}[!htbp]
\centering
\small
\caption{MSE of different estimators on stochastic block models with varying numbers of equal-sized blocks $K$.}
\label{tab:appendix-sbm}
\begin{tabular}{lcccc}
\toprule
Method & $K=1$ & $K=3$ & $K=5$ & $K=10$ \\
\midrule
DM    & 0.00144 & 0.00118 & 0.00132 & 0.00103 \\
DM-FO & 0.00075 & 0.00078 & 0.00070 & 0.00051 \\
GCR   & 0.00248 & \textbf{0.00056} & 0.00097 & 0.00192 \\
Ours  & \textbf{0.00068} & 0.00073 & \textbf{0.00062} & \textbf{0.00048} \\
\bottomrule
\end{tabular}
\end{table}

\subsection{Sensitivity to Propagation Strength}
\label{subsec:appendix-rho0}

We further vary the overall strength of network propagation. We summarize
propagation strength by a target spectral radius parameter $\rho_0$, computed
from the sharing probability matrices used in the simulation. In the
implementation, we fix the graph and the sampled model parameters, and vary
$\rho_0$ by uniformly rescaling the sharing parameters before running the
simulation. This changes the overall propagation strength while preserving the
relative heterogeneity across users, contents, and treatment groups. Implementation details are provided in the released code. A larger $\rho_0$
corresponds to stronger propagation and moves the system closer to the
near-critical regime. Because this sweep is computationally more expensive, we
use a smaller setting with $n=5{,}000$, $m=200$, and fix $\Delta=0.5$.

Table~\ref{tab:appendix-rho0} shows that the MSE of all estimators increases sharply as $\rho_0$ becomes larger, reflecting the increasing difficulty of estimation under stronger network amplification. DM-FO performs best when $\rho_0$ is low or moderate, while the proposed estimator remains very close. When propagation is strongest, at $\rho_0=0.8$, the proposed estimator achieves the lowest MSE. These results suggest that our estimator remains competitive under moderate propagation and can be especially useful when network amplification is strong.

\begin{table}[!htbp]
\centering
\small
\caption{Varying target spectral radius $\rho_0$ at $n=5{,}000$, $m=200$}
\label{tab:appendix-rho0}
\begin{tabular}{lccc}
\toprule
Method & $\rho_0=0.2$ & $\rho_0=0.5$ & $\rho_0=0.8$ \\
\midrule
DM    & 0.0580 & 2.0642 & 49.5012 \\
DM-FO & \textbf{0.0507} & \textbf{1.2891} & 30.5915 \\
GCR   & 0.0835 & 2.0271 & 45.6366 \\
Ours  & 0.0519 & 1.3090 & \textbf{27.9928} \\
\bottomrule
\end{tabular}
\end{table}

\end{document}